\newcommand{\argmax}{\operatornamewithlimits{argmax}}
\newtheorem{theorem}{Theorem}
\theoremstyle{definition}
\theoremstyle{remark}
\title{\vspace{-.8in}On the relationship between predictive coding and backpropagation}
\author{Robert Rosenbaum\footnote{This work was supported by the Air Force Office of Scientific Research (AFOSR) award number FA9550-21-1-0223 and NSF awards DMS-1517828  and DMS-1654268.}
\\{\small \it Department of Applied and Computational} \\{\small \it  Mathematics and Statistics}\\{\small \it University of Notre Dame}\\{\small \it Notre Dame, IN 46556 USA}}
\date{} % No date
\begin{document}
\maketitle

\noindent {\bf NOTE: }The original published version of this manuscript in PLoS One contains several uncorrected errors in mathematical formulae. These errors all involve instances in which the prediction errors, $\epsilon$, are replaced by the negative prediction errors, $-\epsilon$. A correction to these errors was submitted to PLoS One, but it remains unpublished as of April 16, 2024. \\
\emph{This arxiv version of the manuscript contains the corrected mathematical formulae.}

\section*{Abstract}
Artificial neural networks are often interpreted as abstract models of biological neuronal networks, but they are typically trained using the biologically unrealistic backpropagation algorithm and its variants. Predictive coding has been proposed as a potentially more biologically realistic alternative to backpropagation for training neural networks. This manuscript reviews and extends recent work on the mathematical relationship between predictive coding and backpropagation for training feedforward artificial neural networks on supervised learning tasks. Implications of these results for the interpretation of predictive coding and deep neural networks as models of biological learning are discussed along with a repository of functions, Torch2PC, for performing predictive coding with PyTorch neural network models.

%% Please keep the Author Summary between 150 and 200 words
%% Use first person. PLOS ONE authors please skip this step. 
%% Author Summary not valid for PLOS ONE submissions.   
%\section*{Author summary}
%Artificial neural networks are often used as abstract models for better understanding biological neural networks. But artificial neural networks are usually trained using an algorithm called backpropagation that is often considered to be a poor model of learning in the brain. Alternative algorithms derived from a theoretical framework called "predictive coding" have been proposed as alternatives to backpropagation, but recent work reveals mathematical relationships between backpropagation and predictive coding. This raises the question of whether predictive coding is more biologically realistic than backpropagation and whether it makes different experimental predictions than backpropagation. This manuscript reviews and extends this previous work on the relationship between the two algorithms using a combination of mathematical and empirical approaches. Some implications for the relationship between predictive coding and backpropagations on their interpretation as models of cortical processing are discussed. 

\section*{Introduction}

The backpropagation algorithm and its variants are widely used to train artificial neural networks. While artificial and biological neural networks share some common features, a direct implementation of backpropagation in the brain is often considered biologically implausible in part because of the nonlocal nature of parameter updates: The update to a parameter in one layer depends on activity in all deeper layers. In contrast, biological neural networks are believed to learn largely through local synaptic plasticity rules for which changes to a synaptic weight depend on neural activity local to that synapse. While neuromodulators can have non-local impact on synaptic plasticity, they are not believed to be sufficiently specific to implement the precise, high-dimensional credit assignment required by backpropogation.  However, some work has shown that global errors and neuromodulators can work with local plasticity to implement effective learning algorithms~\cite{izhikevich2007solving,clark2021credit}. 
Backpropagation can be performed using local updates if gradients of neurons' activations are passed upstream through feedback connections, but this interpretation implies other biologically implausible properties of the network, like symmetric feedforward and feedback weights. 
See previous work~\cite{lillicrap2020backpropagation,whittington2019theories} for a more complete review of the biological plausibility of backpropagation.

Several approaches have been proposed for achieving or approximating backpropagation with ostensibly more biologically realistic learning rules~\cite{urbanczik2014learning,lillicrap2016random,scellier2017equilibrium,lillicrap2020backpropagation,whittington2019theories,aljadeff2019cortical,kunin2020two,payeur2021burst,clark2021credit, whittington2017approximation,millidge2020predictive,song2020can,salvatori2021predictive}. One such approach~\cite{ whittington2017approximation,millidge2020predictive,song2020can,salvatori2021predictive} is derived from the theory of ``predictive coding'' or ``predictive processing''~\cite{rao1999predictive,friston2010free,huang2011predictive,bastos2012canonical,clark2015surfing,buckley2017free,bogacz2017tutorial,spratling2017review,keller2018predictive}. 
A relationship between predictive coding and backpropagation was first discovered by Whittington and Bogacz~\cite{whittington2017approximation} who showed that, when predictive coding is used to train a feedforward neural network on a supervised learning task, it can produce parameter updates that approximate those computed by backpropagation. These original results have since been extended to more general network architectures and to show that modifying predictive coding by a ``fixed prediction assumption'' leads to an algorithm that produces the exact same parameter updates as backpropagation~\cite{millidge2020predictive,salvatori2021predictive,song2020can}. 

This manuscript reviews and extends previous work~\cite{whittington2017approximation,millidge2020predictive,song2020can,salvatori2021predictive} on the relationship between predictive coding and backpropagation, as well as some implications of these results on the interpretation of predictive coding and artificial neural networks as models of biological learning. 
The main results in this manuscript are as follows,
\begin{enumerate}
\item Accounting for covariance or precision matrices in hidden layers does not affect parameter updates (learning) for predictive coding under the ``fixed prediction assumption'' used in previous work. 
\item Predictive coding under the fixed prediction assumption is {\it algorithmically} equivalent to a direct implementation of backpropagation, which raises the question of whether it should be interpreted as more biologically plausible than backpropagation. 
\item Empirical results show that the magnitude of prediction errors do not necessarily correspond to surprising features of inputs. 
\end{enumerate}
In addition, a public repository of Python functions, \texttt{Torch2PC}, is introduced. These functions can be used to perform predictive coding on any PyTorch Sequential model.
 \texttt{Torch2PC} can be found at\\
\noindent \url{https://github.com/RobertRosenbaum/Torch2PC}\\
\noindent More information about the software can be found in Materials and Methods.
\noindent A Google Drive folder with Colab notebooks that produce all figures in this text can be found at\\ 
\noindent 
\url{https://drive.google.com/drive/folders/1m_y0G_sTF-pV9pd2_sysWt1nvRvHYzX0}\\ 
\noindent A copy of the same code is also stored at\\ 
\noindent \url{https://github.com/RobertRosenbaum/PredictiveCodingVsBackProp}

\section*{Results}

\subsection*{A review of the relationship between backpropagation and predictive coding from previous work}

For completeness, let us first review the backpropagation algorithm. Consider a feedforward deep neural network (DNN) defined by 
\begin{equation}\label{E:fwdpass}
\begin{aligned}
\hat v_0&=x\\
\hat v_{\ell}&=f_\ell(\hat v_{\ell-1};\theta_\ell),\;\; \ell=1,\ldots,L
\end{aligned}
\end{equation}
where each $\hat v_\ell$ is a vector or tensor of activations, each $\theta_\ell$ is a set of parameters for layer $\ell$, and $L$ is the network's depth. 
%where 
%\[
%\hat y=f(x;\theta)=\hat v_L
%\]
%is the network's output , which depends on parameters $\theta=\{\theta_\ell\}_\ell$. Below, we omit the explicit dependence of $f_\ell$ and $f$ on $\theta_\ell$ and $\theta$ for notational convenience. 
%where $x$ is the network's input, $\hat v_\ell$ are the activations, and $\theta_\ell$ the parameters for layer $\ell$. The last layer's activations are interpreted as the network's output,
%\[
%\hat y=f(x;\theta)=\hat v_L
%\]
%with $\theta=\{\theta_\ell\}_\ell$. 
In supervised learning, one seeks to minimize a loss function ${\mathcal L}(\hat y,y)$ where $y$ is a label associated with input, $x$, and 
\[
\hat y=f(x;\theta)=\hat v_L
\]
is the network's output, which depends on parameters $\theta=\{\theta_\ell\}_{\ell=1}^L$. 
%Below, we omit the explicit dependence of $f_\ell$ and $f$ on $\theta_\ell$ and $\theta$ for notational convenience. 
%One interpretation of the loss function is given by 
The loss is typically minimized using gradient-based optimization methods with gradients computed using automatic differentiation tools based on the backpropagation algorithm. For completeness,  backpropagation  is reviewed in the pseudocode below.
%, such as gradient descent and its variants. The gradients used by these methods are typically computed using automatic differentiation tools based on the backpropagation algorithm, which is reviewed in the pseudocode below.

\begin{algorithm}[H]
\caption{A standard implementation of backpropagation. }\label{A:backprop}
 \vspace{.04in}
\begin{algorithmic}%[1]
\State  {\bf Given:} Input ($x$) and label ($y$)
% \For{each mini-batch $(x,y)$ of data}
\vspace{.04in} 
\State {\color{gray}\# forward pass}
\State $\hat v_0=x$
\For{$\ell=1,\ldots,L$}
            \State $\hat v_\ell=f_\ell(\hat v_{\ell-1};\theta_\ell)$
\EndFor

\vspace{.04in} 
\State {\color{gray}\# backward pass}
\State $\delta_L=\tfrac{\partial {\mathcal L}(\hat v_L,y)}{\partial  \hat v_L}$

\For{$\ell=L-1,\ldots,1$}
		\State $\delta_\ell=\delta_{\ell+1}\tfrac{\partial f_{\ell+1}(\hat v_{\ell};\theta_{\ell+1})}{\partial \hat v_\ell}$		
		\State $d\theta_\ell=-\delta_\ell\tfrac{\partial f_\ell(\hat v_{\ell-1};\theta_\ell)}{\partial \theta_\ell}$\vspace{.05in}
\EndFor
\end{algorithmic}
\end{algorithm}

%$\nabla_\theta J(\theta)$ where 
%\[
%J(\theta)=\sum_{i=1}^m {\mathcal L}(\hat y_i,y_i),
%\]
%${\mathcal L}(\hat y,y)$ is a loss function, $\{(x_i,y_i)\}_{i=1}^m$ is a mini-batch of training data, and $\hat y_i=f(x_i\;\theta)$. 

A direct application of the chain rule and mathematical induction shows that backpropagation computes the gradients,
\[
\delta_\ell=\frac{\partial {\mathcal L}(\hat y,y)}{\partial \hat v_\ell}\;\textrm{ and }\;
d\theta_\ell=-\frac{\partial {\mathcal L}(\hat y,y)}{\partial  \theta_\ell}.
\]
The negative gradients, $d\theta_\ell$, are then used to update parameters, either directly for stochastic gradient descent or indirectly for other gradient-based learning methods~\cite{goodfellow2016deep}. For the sake of comparison, {I} used backpropagation to train a 5-layer convolutional neural network on the MNIST data set (Figure~\ref{F:PC}A,B; blue curves). 
{I} next review algorithms derived from the theory of predictive coding and their relationship to backpropagation, as originally derived in previous work~\cite{whittington2017approximation,millidge2020predictive,salvatori2021predictive,song2020can}.

\begin{figure}[!h]
\centering{
\includegraphics[width=4.5in]{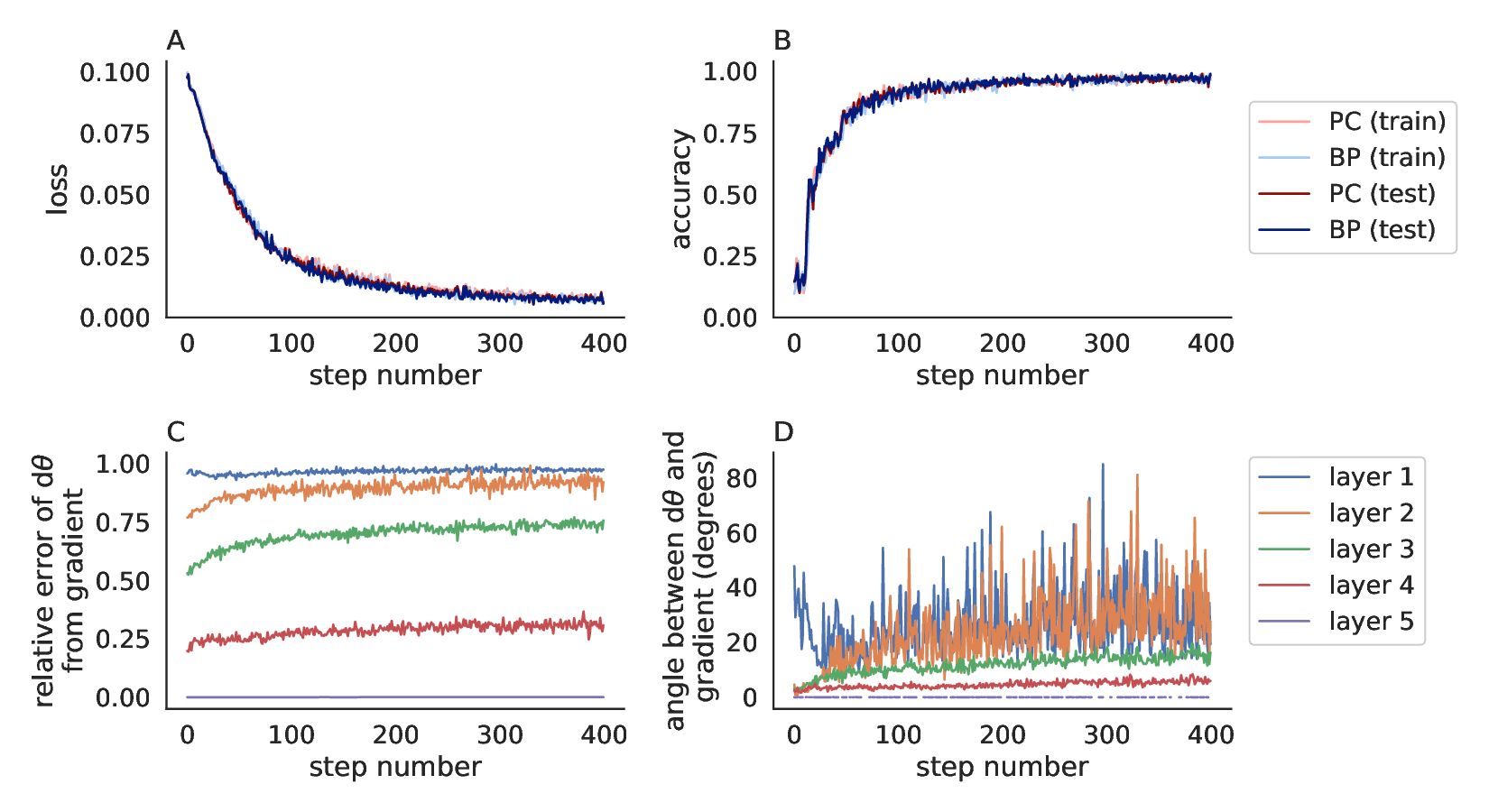}
 }
 \caption{{\bf Comparing backpropagation and predictive coding in a convolutional neural network trained on MNIST.} 
{\bf A,B)} The loss (A) and accuracy (B) on the training set (pastel) and test set (dark) when a 5-layer network was trained using a strict implementation of predictive coding (Algorithm~\ref{A:PC} with $\eta=0.1$ and $n=20$; red) and backpropagation (blue). {\bf C,D)} The relative error (C) and angle (B) between the parameter update, $d\theta$, computed by Algorithm~\ref{A:PC} and the negative gradient of the loss at each layer. Predictive coding and backpropagation give similar accuracies, but the parameter updates are less similar.
}
 \label{F:PC}
 \end{figure}

\subsubsection*{A strict interpretation of predictive coding does not accurately compute gradients.}

{I} begin by reviewing supervised learning under a strict interpretation of predictive coding. The formulation in this section is equivalent to the one first studied by Whittington and Bogacz~\cite{whittington2017approximation}, except that their results are restricted to the case in which $f_\ell(v_{\ell-1};\theta_\ell)=\theta_\ell g_\ell(v_{\ell-1})$ for some point-wise-applied activation function, $g_\ell$, and connectivity matrix, $\theta_\ell$. Our formulation  extends this formulation to arbitrary vector-valued differentiable functions, $f_\ell$. For the sake of continuity with later sections, {I} also use the notational conventions from~\cite{millidge2020predictive} which differ from those in~\cite{whittington2017approximation}. %A table comparing our notational conventions to those from~\cite{whittington2017approximation} is provided in S1 Table.

Predictive coding can be derived from a hierarchical, Gaussian probabilistic model in which each layer, $\ell$, is associated with  a Gaussian random variable, $V_\ell$, satisfying
%Let us first consider a probabilistic interpretation of a feedforward pass through the network in Eq.~\eqref{}.  First, consider a probabilistic model in which each layer, $\ell=1,\ldots,L$, of the DNN is associated with  a Gaussian random variable, $V_\ell$, satisfying
\begin{equation}\label{E:ProbModel}
\begin{aligned}
%p(v_\ell\,|\, v_{\ell-1},\ldots,V_0)=
p(V_\ell=v_\ell\,|\,V_{\ell-1}= v_{\ell-1})=\mathcal N(v_\ell\, ; \,\,f_\ell( v_{\ell-1};\theta_\ell),\;\Sigma_\ell)
%E[v_{\ell}\,|\, v_{\ell-1}]=f_\ell(v_{\ell-1}; \theta_\ell)
%p(v_\ell\,|\, v_{\ell-1})&=\mathcal N(v_\ell\,;\, ,\Sigma_\ell)\\
%\mu_\ell&=
\end{aligned}
\end{equation}
where $\mathcal N(v;\,\mu,\Sigma)\propto \exp(-[v-\mu]^T\Sigma^{-1}[v-\mu]/2)$ is the multivariate Gaussian distribution with mean, $\mu$, and covariance matrix, $\Sigma$, evaluated at $v$. Following previous work~\cite{whittington2017approximation,millidge2020predictive,salvatori2021predictive,song2020can}, {I} take $\Sigma=I$ to be the identity matrix, but later relax this assumption~\cite{bogacz2017tutorial}. 

If we condition on an observed input, $V_0=x$, then a forward pass through the network described by Eq.~\eqref{E:fwdpass} corresponds to setting $\hat v_0=x$ and then sequentially computing the conditional expectations or, equivalently, maximizing conditional probabilities, 
\[
\begin{aligned}
\hat v_\ell&=E[V_\ell\,|\,V_{\ell-1}=\hat v_{\ell-1}]\\
&=\argmax_{v_\ell} p(V_\ell=v_\ell\,|\,V_{\ell-1}=\hat v_{\ell-1})\\
&=f_\ell(\hat v_{\ell-1};\theta_\ell)
\end{aligned}
\]
until reaching an inferred output, $\hat y=\hat v_L$. 
Note that this forward pass does not necessarily maximize the global conditional probability, $p(V_L=\hat y\,|\,v_0=x)$ and it does not account for a prior distribution on $V_L$, which arises in related work on predictive coding for unsupervised learning~\cite{rao1999predictive,bogacz2017tutorial}.
One interpretation of a forward pass is that each $\hat v_\ell$ is the network's ``belief'' about the state of $V_\ell$, when only $V_0=x$ has been observed. 

%This is equivalent to maximizing $p(V_\ell=\hat v_\ell\,|\, V_{\ell-1}=\hat v_{\ell-1})$ with respect to $\hat v_\ell$.  
%This approach does not necessarily maximize the global conditional probability, $p(V_L=\hat y\,|\,v_0=x)$, but it provides an approximation in which each layer only needs local information. 
%Moreover, this approach does not account for any prior information about the distribution of $y=V_L$. 
%More generally, a forward pass provides a means to infer $\hat y$ from $x$. 
%This layer-wise maximization of conditional probabilities does not necessarily maximize  $p(V_L=v_L\,|\,v_0=x)$, but provides an approximation that is easily computed with local information.

%[{\bf ... Do forward pass first version, the right one. And then explain how it differs from true pred coding.}]

%\subsection*{A  direct interpretation of predictive coding does not approximate backpropagation.}

Now suppose that we condition on both an observed input, $V_0=x$, {\it and} its label, $V_L=y$. In this case, generating beliefs about the hidden states, $V_\ell$, is more difficult because we need to account for potentially conflicting information at each end of the network. We can proceed by initializing a set of beliefs, $v_\ell$, about the state of each $V_\ell$, and then updating our initial beliefs to be more consistent with the observations, $x$ and $y$, and  parameters, $\theta_\ell$. 

The error made by a set of beliefs, $\{v_\ell\}_{\ell=1}^L$, under parameters, $\{\theta_\ell\}_{\ell=1}^L$, can be quantified by 
\[
\epsilon_\ell=f_\ell(v_{\ell-1};\theta_\ell)-v_\ell
\]
for $\ell=1,\ldots,L-1$ where $v_0=V_0=x$ is observed. It is not so simple to quantify the error, $\epsilon_L$, made at the last layer in a way that accounts for arbitrary loss functions. In the special case of a squared-Euclidean loss function,
\[
{\mathcal L}(\hat y,y)=\frac{1}{2}\|\hat y-y\|^2,
\]
where $\|u\|^2=u^Tu$. 
Standard formulations of predictive coding~\cite{bogacz2017tutorial,buckley2017free} use
\begin{equation}\label{E:epsLPC}
\epsilon_L=f_L(v_{L-1};\theta_L)-y
\end{equation}
where recall that $y$ is the label. In this case, $\epsilon_L$ satisfies
\begin{equation}\label{E:epsLPC2}
\epsilon_L=\frac{\partial {\mathcal L}(\tilde v_L,y)}{\partial \tilde v_L}
\end{equation}
where 
\[
\tilde v_L=f_L(v_{L-1};\theta_L).
\] 
We use the $\tilde \cdot$ to emphasize that $\tilde v_L$ is different from $\hat v_L$ (which is defined by a forward pass starting at $\hat v_0=x$) and is defined in a fundamentally different way from the $v_\ell$ terms (which do not necessarily satisfy $v_\ell=f_\ell(v_{\ell-1};\theta_\ell)$). 
We can then define the total summed magnitude of errors as
\[
F=\frac{1}{2}\sum_{\ell=1}^L \|\epsilon_\ell\|^2
\]
More details on the derivation of $F$ in terms of variational Bayesian inference can be found in previous work~\cite{friston2010free,bogacz2017tutorial,buckley2017free,millidge2020predictive} where $F$ is known as the variational free energy of the model. Essentially, minimizing $F$ produces a model that is more consistent with the observed data. Minimizing $F$ by gradient descent on $v_\ell$ and $\theta_\ell$ produce the inference and learning steps of predictive coding, respectively. 

Under a more heuristic interpretation,  $v_\ell$ represents the network's ``belief'' about $V_\ell$, and $f_\ell(v_{\ell-1};\theta_\ell)$ is the ``prediction'' of $v_\ell$ made by the previous layer. Under this interpretation, $\epsilon_\ell$ is the error made by the previous layer's prediction, so $\epsilon_\ell$ is called a ``prediction error.'' Then $F$ quantifies the total magnitude of  prediction errors given a set of beliefs, $v_\ell$, parameters, $\theta_\ell$, and observations, $V_0=x$ and $V_L=y$.

% and 
%\[
%\epsilon_\ell=v_\ell-f_\ell(v_{\ell-1};\theta_\ell)
%\]
%for $\ell=1,\ldots,L-1$ are the ``prediction errors.'' Note that these errors will be zero for the initial beliefs, $v_\ell=\hat v_\ell$. The last layer's prediction error accounts for the observed label, $V_L=y$, and is defined as
%\[
%\epsilon_L=\frac{\partial {\mathcal L}(\hat v_L,y)}{\partial  v_L}
%\]
%where ${\mathcal L}$ is the loss function and $\hat y=\hat v_L$ is the output from a feedforward pass. 
%Note that when ${\mathcal L}(\hat y,y)=\|\hat y-y\|^2/2$ is the Euclidean norm, the prediction error becomes $\epsilon_L= v_L-y$, which has a similar form to the prediction errors of previous layers. 

In predictive coding, beliefs, $v_\ell$, are updated to minimize the error, $F$. This can be achieved by gradient descent, \emph{i.e.}, by making updates of the form 
\[
v_\ell \gets v_\ell +\eta dv_\ell
\]
where $\eta$ is a step size and 
\begin{equation}\label{E:dvell}
\begin{aligned}
dv_\ell&=-\frac{\partial F}{\partial v_\ell}\\
%&=\epsilon_\ell-\frac{\partial f_{\ell+1}(v_{\ell};\theta_{\ell+1})}{\partial v_\ell}^T\epsilon_{\ell+1}\\
&=\epsilon_\ell-\epsilon_{\ell+1}\frac{\partial f_{\ell+1}(v_{\ell};\theta_{\ell+1})}{\partial v_\ell}
\end{aligned}
\end{equation}
%However, note that this equation for $dv_\ell$ is only applicable for $\ell=1,\ldots, L-2$. For $\ell=L-1$, we have
%\[
%\begin{aligned}
%dv_{L-1}&=-\frac{\partial F}{\partial v_{L-1}}\\
%&=-\epsilon_{L-1}+\epsilon_L\frac{\partial G}{\partial v_{L-1}}
%\end{aligned}
%\]
In this expression, ${\partial f_{\ell+1}(v_{\ell};\theta_{\ell+1})}/{\partial v_\ell}$ is a Jacobian matrix and $\epsilon_{\ell+1}$  is a row-vector to simplify notation, but a column-vector interpretation is similar. If $x$ is a mini-batch instead of one data point, then $v_\ell$ is an $m\times n_\ell$ matrix and derivatives are tensors.  These conventions are used throughout the manuscript. 
The updates in Eq.~\eqref{E:dvell} can be iterated until convergence or approximate convergence. Note that the prediction errors, $\epsilon_\ell=f_\ell(v_{\ell-1};\theta_\ell)-v_\ell$, should also be updated on each iteration. 
%These iterations can be interpreted as an inference phase of the predictive coding algorithm in which the network forms beliefs that seek to minimize $F$ under fixed parameters, $\theta_\ell$, and observations, $x$ and $y$.

Learning can also be phrased as minimizing $F$ with gradient descent on parameters. Specifically, 
\[
\theta_\ell=\theta_\ell+\eta_\theta d\theta_\ell
\]
where
\begin{equation}\label{E:dtheta}
\begin{aligned}
d\theta_\ell&=-\frac{\partial F}{\partial \theta_\ell}\\
&=-\epsilon_\ell\frac{\partial f_\ell(v_{\ell-1};\theta_\ell)}{\partial \theta_\ell}.
\end{aligned}
\end{equation}
Note that some previous work uses the negative of the prediction errors used here, {\it i.e.}, they use $\epsilon_\ell=v_\ell-f_\ell(v_{\ell-1};\theta_\ell)$. While this choice changes some of the expressions above, the value of $F$ and its dependence on $\theta_\ell$ is not changed because $F$ is defined by the norms of the $\epsilon_\ell$ terms. The complete algorithm is defined more precisely by the pseudocode below:

%To summarize the predictive coding algorithm more precisely
%\begin{equation}
%\begin{aligned}
%&\hat v_0=x\\
%&\textrm{for }\ell=1,\ldots,L\\
%&\;\;\;\;\hat v_\ell=f_\ell(\hat v_{\ell-1};\theta_\ell)\\
%&\;\;\;\;v_\ell=\hat v_\ell\\
%&\textrm{for }i=1,\ldots,n\\
%&\;\;\;\;  \textrm{for }\ell=1,\ldots,L\\
%&\;\;\;\;\;\;\;\; \epsilon_\ell=v_\ell-f_\ell(v_{\ell-1};\theta_\ell)\\
%&\;\;\;\; \epsilon_L=\tfrac{\partial {\mathcal L}(v_L,y)}{\partial  v_L}\\
%&\;\;\;\;  \textrm{for }\ell=1,\ldots,L\\
%&\;\;\;\;\;\;\;\; dv_\ell=-\epsilon_\ell+\epsilon_{\ell+1}\tfrac{\partial f_{\ell+1}(v_{\ell};\theta_{\ell+1})}{\partial v_\ell}\\
%&\;\;\;\;  \textrm{for }\ell=1,\ldots,L\\
%&\;\;\;\;\;\;\;\; v_\ell=v_\ell+\eta dv_\ell
%\end{aligned}
%\end{equation}

\begin{algorithm}[H]
\caption{A direct interpretation of predictive coding.}\label{A:PC}
 \vspace{.04in}
\begin{algorithmic}%[1]
\State  {\bf Given:} Input ($x$), label ($y$), and initial beliefs ($v_\ell$)
% \For{each mini-batch $(x,y)$ of data}

%\vspace{.06in} 
%\State {\color{gray}\# forward pass}
%\State $\hat v_0=x$
%\For{$\ell=1,\ldots,L$}
%            \State $\hat v_\ell=f_\ell(\hat v_{\ell-1};\theta_\ell)$
%            \State $v_\ell=\hat v_\ell$
%\EndFor

\vspace{.06in} 
\State {\color{gray}\# error and belief computation}
%\State $\epsilon_L=\tfrac{\partial {\mathcal L}(\hat v_L,y)}{\partial  \hat v_L}$
\For{$i=1,\ldots,n$}
	\State $\tilde v_L=f_L(v_{L-1};\theta_L)$
	\State $\epsilon_L=\tfrac{\partial {\mathcal L}(\tilde v_L,y)}{\partial \tilde v_L}$
	\For{$\ell=L-1,\ldots,1$}
		\State $\epsilon_\ell=f_\ell(v_{\ell-1};\theta_\ell)-v_\ell$		
%	\EndFor
	
%	\For{$\ell=1,\ldots,L-1$}
		\State $dv_\ell=\epsilon_\ell-\epsilon_{\ell+1}\tfrac{\partial f_{\ell+1}(v_{\ell};\theta_{\ell+1})}{\partial v_\ell}$
%	\EndFor
%	\For{$\ell=1,\ldots,L-1$}
		\State $v_\ell=v_\ell+\eta dv_\ell$ 
	\EndFor
\EndFor

\vspace{.06in} 
\State {\color{gray}\# parameter update computation}
\For{$\ell=1,\ldots,L$}
\State $d\theta_\ell=-\epsilon_\ell\tfrac{\partial f_\ell(v_{\ell-1};\theta_\ell)}{\partial \theta_\ell}$\vspace{.05in}
%\State $\theta_\ell=\theta_\ell+\eta_\theta d\theta_\ell$
\EndFor
\end{algorithmic}
\end{algorithm}

Here and elsewhere, $n$ denotes the number of iterations for the inference step.
The choice of initial beliefs is not specified in the algorithm above, but previous work~\cite{whittington2017approximation,millidge2020predictive,salvatori2021predictive,song2020can} uses the results from a forward pass, $v_\ell=\hat v_\ell$, as initial conditions and {I} do the same in all numerical examples. 

{I} tested Algorithm~\ref{A:PC} on MNIST using a 5-layer convolutional neural network. To be consistent with the definitions above, {I} used a mean-squared error (squared Euclidean) loss function, which required one-hot encoded labels~\cite{goodfellow2016deep}. Algorithm~\ref{A:PC} performed similarly to backpropagation  (Fig.~\ref{F:PC}A,B) even though the parameter updates  did not match the true gradients  (Fig.~\ref{F:PC}C,D). Algorithm~\ref{A:PC} was slower than backpropagation (31s for Algorithm~\ref{A:PC} versus 8s for backpropagation when training metrics were not  computed on every iteration) in part because Algorithm~\ref{A:PC} requires several inner iterations to compute the prediction errors ($n=20$ iterations used in this example).  Algorithm~\ref{A:PC} failed to converge on a larger model. Specifically, the loss grew consistently with iterations when trying to use Algorithm~\ref{A:PC} to train the 6-layer CIFAR-10 model described in the next section.
\nameref{S:1} shows the same results from Fig.~\ref{F:PC} repeated across 30 trials with different random seeds to quantify the mean and standard deviation across trials.

\begin{figure}[!h]
%  \centering{
 \includegraphics[width=5in]{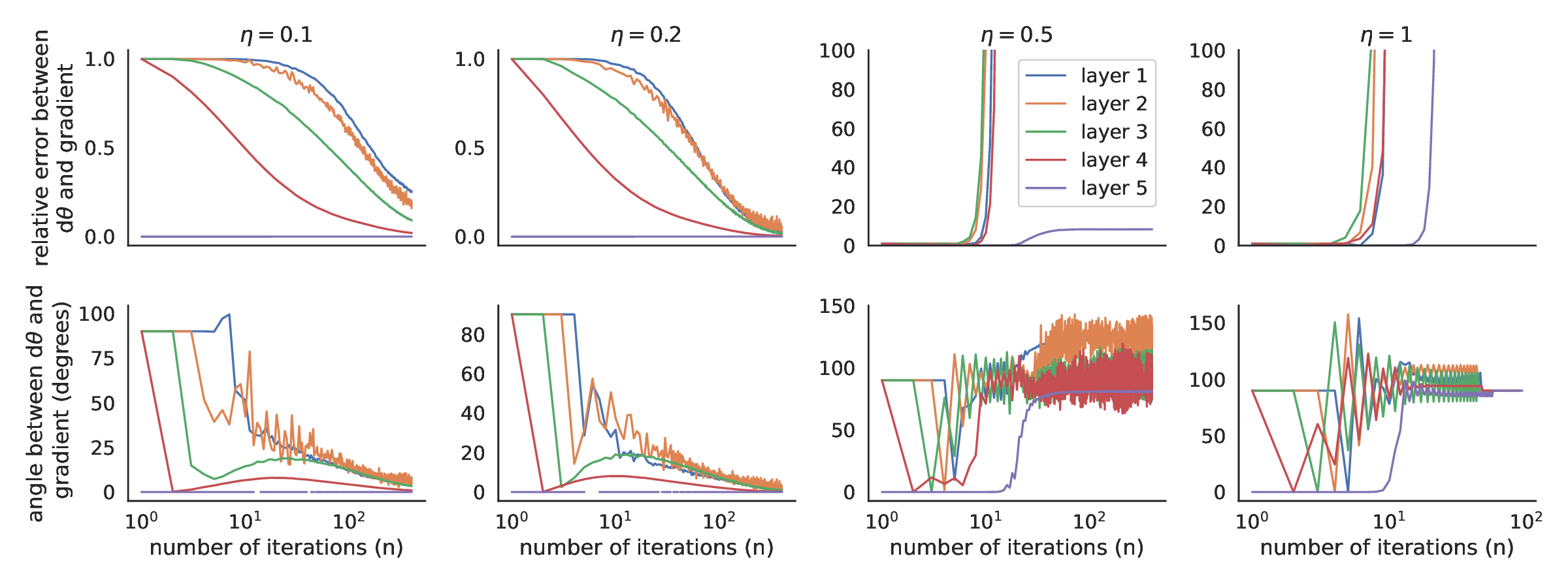}
%  }
\caption{{\bf Comparing parameter updates from predictive coding to true gradients in a network trained on MNIST.}  Relative error and angle between $d\theta_\ell$ produced by predictive coding (Algorithm~\ref{A:PC}) as compared to the exact gradients, $\partial {\mathcal L}/\partial \theta_\ell$ computed by backpropagation (relative error defined by $\|d\theta_{pc}-d\theta_{bp}\|/\|d\theta_{bp}\|$). Updates were computed as a function of the number of iterations, $n$, used in Algorithm~\ref{A:PC} for various values of the step size, $\eta$, using the model from Fig.~\ref{F:PC} applied to one mini-batch of data. Both models were initialized identically to the pre-trained parameter values from the trained model in Fig.~\ref{F:PC}. Parameter updates converge near the gradients after many iterations for smaller values of $\eta$, but diverge for larger values. 
}
\label{F:PCErrs}
\end{figure}

Fig.~\ref{F:PC}C,D shows that predictive coding does not update parameters according to the true gradients, but it is not immediately clear whether this would be resolved by using more iterations (larger $n$) or different values of the step size, $\eta$. {I} next compared the parameter updates, $d\theta_\ell$, to the true gradients, $\partial {\mathcal L}/\partial \theta_\ell$ for different values of $n$ and $\eta$ (Fig.~\ref{F:PCErrs}). For the smaller values of $\eta$ tested ($\eta=0.1$ and $\eta=0.2$) and larger values of $n$ ($n>100$), parameter updates were similar to the true gradients in the last two layers, but they differed substantially in the first two layers. The largest  values of $\eta$ tested ($\eta=0.5$ and $\eta=1$) caused the iterations in Algorithm~\ref{A:PC} to diverge. 

%For sufficiently small $\eta$ and large $n$, the parameter updates approximate those computed by backpropagation, but larger values of $\eta$ cause the parameter updates to diverge (Fig.~\ref{F:PCErrs}). 

Some choices in designing Algorithm~\ref{A:PC} were made arbitrarily. For example, the three updates inside the inner for-loop over $\ell$ could be performed in a different order or the outer for-loop over $i$  could be changed to a while-loop with a convergence criterion.   For any initial conditions and any of these design choices, if the iterations over $i$ are repeated until convergence or approximate convergence of each $v_\ell$ to a fixed point, $v_\ell^*$, then the increments must satisfy $dv_\ell=0$ at the fixed point and therefore the fixed point values of the prediction errors, $\epsilon^*_\ell$, must satisfy
\begin{equation}\label{E:EpsStarPC}
\epsilon^*_\ell=\frac{\partial f_{\ell+1}(v^*_{\ell};\theta_{\ell+1})}{\partial v^*_\ell}\epsilon^*_{\ell+1}
\end{equation}
for $\ell=1,\ldots,L-1$. By the definition of $\epsilon_L$, we have
\begin{equation}\label{E:EpsLStarPC}
\epsilon_L^*=\frac{\partial {\mathcal L}(\tilde v^*_L,y)}{\partial  \tilde v^*_L}.
\end{equation}
where
\[
\tilde v^*_L=f_L(v^*_{L-1};\theta_L).
\]
Combining Eqs.~\eqref{E:EpsStarPC} and \eqref{E:EpsLStarPC} gives the fixed point prediction errors of the penultimate layer
\begin{equation}\label{E:123}
\begin{aligned}
\epsilon_{L-1}^*&=\frac{\partial {\mathcal L}(\tilde v^*_L,y)}{\partial  \tilde v^*_L}\frac{\partial f_{L}(v^*_{L-1};\theta_{L})}{\partial v^*_{L-1}}\\
&=\frac{\partial {\mathcal L}(\tilde v^*_L,y)}{\partial v^*_{L-1}}
\end{aligned}
\end{equation}
where we used the fact that $  \tilde v^*_L=f_{L}(v^*_{L-1};\theta_{L})$ and the chain rule. The error in layer $L-2$ is given by 
\[
\begin{aligned}
\epsilon_{L-2}^*&=\frac{\partial {\mathcal L}(\tilde v^*_L,y)}{\partial v^*_{L-1}}\frac{\partial f_{L-1}(v^*_{L-2};\theta_{L-1})}{\partial v^*_{L-2}}.
\end{aligned}
\]
Note that we cannot apply the chain rule to reduce this product (like we did for Eq.~\eqref{E:123}) because it is not necessarily true that $ v^*_{L-1}= f_{L-1}(v^*_{L-2};\theta_{L-1})$. {I} revisit this point below. 
We can continue this process to derive 
\[
\epsilon_{L-3}^*=\frac{\partial {\mathcal L}(\tilde v^*_L,y)}{\partial v^*_{L-1}}\frac{\partial f_{L-1}(v^*_{L-2};\theta_{L-1})}{\partial v^*_{L-2}}
\frac{\partial f_{L-2}(v^*_{L-3};\theta_{L-2})}{\partial v^*_{L-3}}
\]
and continue for $\ell=L-4,\ldots,1$. 
In doing so, we see (by induction) that  $\epsilon_\ell^*$ can be written as 
\begin{equation}\label{E:epsilonellStar}
\begin{aligned}
%\epsilon_{L-2}^*&=\frac{\partial {\mathcal L}(\hat v_L,y)}{\partial  \hat v_L}\frac{\partial f_{L}(v^*_{L-1};\theta_{L})}{\partial v^*_{L-1}}\frac{\partial f_{L-1}(v^*_{L-2};\theta_{L-1})}{\partial v^*_{L-2}}
\epsilon_{\ell}^*&=\frac{\partial {\mathcal L}(\tilde v^*_L,y)}{\partial  \tilde v^*_{L-1}}\prod_{\ell'=\ell}^{L-2} \frac{\partial f_{\ell'+1}(v^*_{\ell'};\theta_{\ell'+1})}{\partial v^*_{\ell'}}.
%\frac{\partial f_{L}(v^*_{L-1};\theta_{L})}{\partial v^*_{L-1}}\frac{\partial f_{L-1}(v^*_{L-2};\theta_{L-1})}{\partial v^*_{L-2}}
\end{aligned}
\end{equation}
for $\ell=1,\ldots,L-2$. 
Therefore, if the inference loop converges to a fixed point, then the subsequent parameter update obeys
\begin{equation}\label{E:dthetaStar}
d\theta_\ell=-\frac{\partial {\mathcal L}(\tilde v^*_L,y)}{\partial  \tilde v^*_{L-1}}\left[\prod_{\ell'=\ell}^{L-2} \frac{\partial f_{\ell'+1}(v^*_{\ell'};\theta_{\ell'+1})}{\partial v^*_{\ell'}}\right]\frac{\partial f_\ell(v^*_{\ell-1};\theta_\ell)}{\partial \theta_\ell}
\end{equation}
by Eq.~\eqref{E:dtheta}. 
It is not clear whether  there is a simple mathematical relationship between these parameter updates and the negative gradients, $d\theta_\ell=-\partial {\mathcal L}/\partial \theta_\ell$, computed by backpropagation.

It is tempting to assume that $v^*_{\ell}=f_{\ell}(v^*_{\ell-1};\theta_{\ell})$, in which case the product terms would be reduced by the chain rule. Indeed, this assumption would imply that $v^*_\ell=\hat v_\ell$ and $\tilde v_L^*=\hat v_L$ and, finally, that $\epsilon_\ell=\partial {\mathcal L}/\partial \hat v_\ell$  and $d\theta_\ell=-\partial {\mathcal L}/\partial \theta_\ell$, identical to the values computed by backpropagation. However,  we cannot generally expect to have $v^*_{\ell}=f_{\ell}(v^*_{\ell-1};\theta_{\ell})$ because this would imply that $\epsilon_{\ell}^*=0$ and therefore $\partial {\mathcal L}/\partial v_\ell^*=\partial {\mathcal L}/\partial \theta_\ell=0$. In other words, Algorithm~\ref{A:PC} is only equivalent to backpropagation in the case where parameters are at a critical point of the loss function, so all updates are zero. 
Nevertheless, this thought experiment suggests a modification to Algorithm~\ref{A:PC}  for which the fixed points {\it do} represent the true gradients~\cite{millidge2020predictive,whittington2017approximation}. {I} review that modification in the next section.

Note also that the calculations above rely on the assumption of a Euclidean loss function, ${\mathcal L}(\hat y,y)=\|\hat y-y\|^2/2$. If we want to generalize the algorithm to different loss functions, then Eqs.~\eqref{E:epsLPC} and \eqref{E:epsLPC2} could not both be true, and therefore Eqs.~\eqref{E:EpsStarPC} and \eqref{E:EpsLStarPC} could not both be true. This leaves open the question of how to define $\epsilon_L$ when using loss functions that are not proportional to the squared Euclidean norm. 
If we were to define $\epsilon_L$ by \eqref{E:epsLPC}, at the expense of losing \eqref{E:epsLPC2}, then the algorithm would not account for the loss function at all, so it would effectively assume a Euclidean loss, {\it i.e.}, it would compute the same values that are computed by Algorithm~\ref{A:PC} with a Euclidean loss. 
If we instead were to define $\epsilon_L$ by Eq.~\eqref{E:epsLPC2} at the expense of \eqref{E:epsLPC}, then Eqs.~\eqref{E:dvell} and \eqref{E:EpsStarPC} would no longer be true for $\ell=L-1$ and Eq.~\eqref{E:dtheta} would no longer be true for $\ell=L$. Instead, all three of these equations would involve second-order derivatives of the loss function, and therefore the fixed point equations \eqref{E:epsilonellStar} and \eqref{E:dthetaStar} would also involve second order derivatives. The interpretation of the parameter updates is not clear in this case.
One might instead try to define $\epsilon_L$ by the result of a forward pass,
\[
\begin{aligned}
\epsilon_L&=f_L(\hat v_{L-1};\theta_L)-y\\
&=\hat v_L-y
\end{aligned}
\]
but then $\epsilon_L$ would be a constant with respect to $v_{L-1}$, so we would have $\partial \epsilon_L/\partial v_{L-1}=0$, and therefore  Eq.~\eqref{E:dvell} at $\ell=L-1$ would become
\[
\begin{aligned}
dv_{L-1}&=-\frac{\partial F}{\partial v_{L-1}}\\
&=\epsilon_{L-1}
\end{aligned}
\]
which has a fixed point at $\epsilon^*_{L-1}=0$. This would finally imply that all the errors converge to $\epsilon_\ell^*=0$ and therefore $d\theta_\ell=0$ at the fixed point.

{I} next discuss a modification of Algorithm~\ref{A:PC} that converges to the same gradients computed by backpropagation, {\it and} is applicable to general loss functions~\cite{millidge2020predictive,whittington2017approximation}.

\subsubsection*{Predictive coding modified by the fixed prediction assumption converges to the gradients computed by backpropagation.}

 \begin{figure}
\centering{
 \includegraphics[width=4.5in]{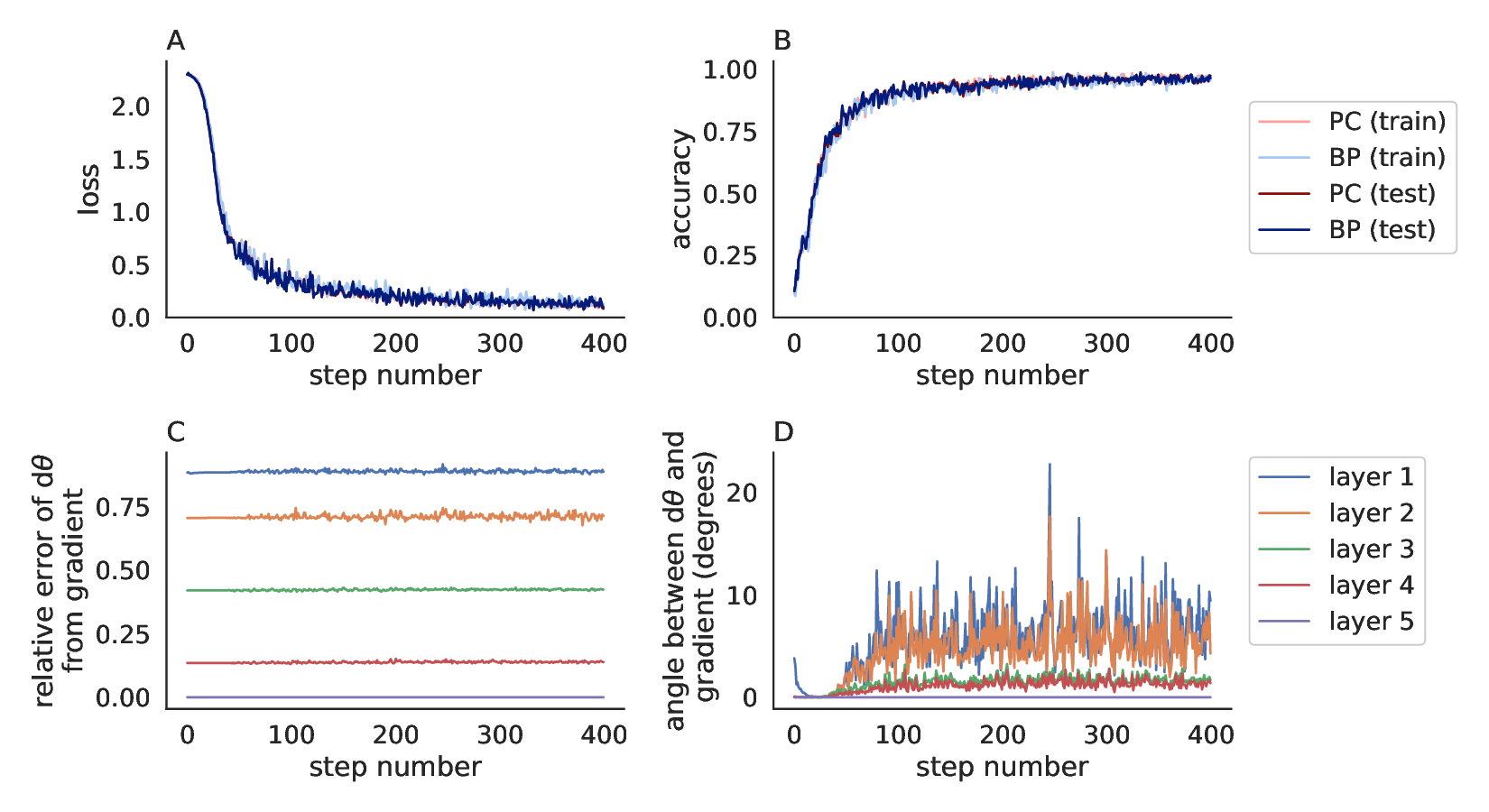}
}
 \caption{{\bf Predictive coding modified by the fixed prediction assumption compared to backpropagation in a convolutional neural network trained on MNIST.} Same as Fig.~\ref{F:PC} except Algorithm~\ref{A:ModPC} was used (with $\eta=0.1$ and $n=20$) in place of Algorithm~\ref{A:PC}.  The accuracy of predictive coding with the fixed prediction assumption is similar to backpropagation, but the parameter updates are less similar for these hyperparameters. }
 \label{F:ModPC}
 \end{figure}

Previous work~\cite{whittington2017approximation,millidge2020predictive} proposed a modification of the predictive coding algorithm described above called the ``fixed prediction assumption'' which {I} now review. Motivated by the considerations in the last few paragraphs of the previous section, we can selectively substitute some terms of the form $v_\ell$ and $f_\ell (v_{\ell-1};\theta_\ell)$ in Algorithm~\ref{A:PC} with $\hat v_\ell$ (or, equivalently, $f_\ell (\hat v_{\ell-1};\theta_\ell)$) where $\hat v_\ell$ are the results of the original forward pass starting from $\hat v_0=x$. Specifically,  the following modifications are made to the quantities computed by Algorithm~\ref{A:PC}
\begin{equation}\label{E:ModPC}
\begin{aligned}
\epsilon_\ell&=\hat v_\ell-v_\ell\\
\epsilon_L&=\frac{\partial {\mathcal L}(\hat v_L,y)}{\partial \hat v_L}\\
dv_\ell&=\epsilon_\ell-\epsilon_{\ell+1}\frac{\partial f_{\ell+1}(\hat v_{\ell};\theta_{\ell+1})}{\partial \hat v_\ell}\\
d\theta_\ell&=-\epsilon_\ell\frac{\partial f_\ell(\hat v_{\ell-1};\theta_\ell)}{\partial \theta_\ell}
\end{aligned}
\end{equation}
for $\ell=1,\ldots,L-1$. This modification can be interpreted as ``fixing'' the predictions at the values computed by a forward pass and is therefore called the ``fixed prediction assumption''~\cite{whittington2017approximation,millidge2020predictive}. 
Additionally, the initial conditions of the beliefs are set to the results from a forward pass,
$
v_\ell=\hat v_\ell
$
for $\ell=1,\ldots,L-1$. 
%Note that $dv_\ell$ and $d\theta_\ell$ no longer represent the negative gradient of $F$ or any other function 
The complete modified algorithm is defined by the pseudocode below:

%Note that $dv_\ell$ no longer represents the gradient of $F$ or any other function that we am aware of. Hence, the inference loop no longer 

%Note that $\hat v_\ell$, $\epsilon_L$, and 
%\[
%\frac{\partial f_{\ell+1}(\hat v_{\ell};\theta_{\ell+1})}{\partial \hat v_\ell}=\frac{\partial \hat v_{\ell+1}}{\partial \hat v_\ell}
%\]
%are held fixed while $v_\ell$ and $\epsilon_\ell$ are updated iteratively ins. 

\begin{algorithm}[H]
\caption{Supervised learning with predictive coding modified by the fixed prediction assumption. Adapted from the algorithm in~\cite{millidge2020predictive} and similar to the algorithm from~\cite{whittington2017approximation}.}\label{A:ModPC}
\begin{algorithmic}%[1]
 \vspace{.04in}
%\For{each mini-batch $(x,y)$ of data}
\State {\bf Given:} Input ($x$) and label ($y$)
\vspace{.04in} 
\State {\color{gray}\# forward pass}
\State $\hat v_0=x$
\For{$\ell=1,\ldots,L$}
            \State $\hat v_\ell=f_\ell(\hat v_{\ell-1};\theta_\ell)$
            \State $v_\ell=\hat v_\ell$
\EndFor

\vspace{.04in} 
\State {\color{gray}\# error and belief computation}
\State $\epsilon_L=\tfrac{\partial {\mathcal L}(\hat v_L,y)}{\partial  \hat v_L}$
\For{$i=1,\ldots,n$}
	\For{$\ell=L-1,\ldots,1$}
		\State $\epsilon_\ell=\hat v_\ell- v_\ell$
%	\EndFor	
%	\For{$\ell=1,\ldots,L-1$}
		\State $dv_\ell=\epsilon_\ell-\epsilon_{\ell+1}\tfrac{\partial f_{\ell+1}(\hat v_{\ell};\theta_{\ell+1})}{\partial \hat v_\ell}$
%	\EndFor
%	\For{$\ell=1,\ldots,L-1$}
		\State $v_\ell=v_\ell+\eta dv_\ell$ 
	\EndFor
\EndFor

\vspace{.06in} 
\State {\color{gray}\# parameter update computation}
\For{$\ell=1,\ldots,L$}
\State $d\theta_\ell=-\epsilon_\ell\tfrac{\partial f_\ell(\hat v_{\ell-1};\theta_\ell)}{\partial \theta_\ell}$\vspace{.05in}
%\State $\theta_\ell=\theta_\ell+\eta_\theta d\theta_\ell$
%\State $\theta_\ell=\theta_\ell+\epsilon_\ell\tfrac{\partial f_\ell(v_{\ell-1};\theta_\ell)}{\partial \theta_\ell}$
\EndFor
%\EndFor
\end{algorithmic}
\end{algorithm}

Note, again, that some choices in Algorithm~\ref{A:ModPC} were made arbitrarily. The three updates inside the inner for-loop over $\ell$ could be performed in a different order or the outer for loop over $i$  could be changed to a while-loop with a convergence criterion. %We chose the formulation above to match the publicly available code~\cite{} provided with~\cite{}. 
Regardless of these choices, the fixed points, $\epsilon^*_\ell$, can again be computed by setting $dv_\ell=0$ to obtain
\[
\epsilon^*_\ell=\epsilon^*_{\ell+1}\frac{\partial f_{\ell+1}(\hat v_{\ell};\theta_{\ell+1})}{\partial \hat v_\ell}.
\]
Now note that $\epsilon_L$ is fixed, so 
\[
\epsilon^*_L=\frac{\partial {\mathcal L}(\hat v_L,y)}{\partial  \hat v_L}
\]
and we can combine these two equations to compute 
\[
\begin{aligned}
\epsilon^*_{L-1}&=\frac{\partial {\mathcal L}(\hat v_L,y)}{\partial  \hat v_L}\frac{\partial f_{L}(\hat v_{L-1};\theta_{L})}{\partial \hat v_{L-1}}\\
&=\frac{\partial {\mathcal L}(\hat v_L,y)}{\partial \hat v_{L-1}}
\end{aligned}
\]
where we used the chain rule and the fact that $\hat v_\ell=f_\ell(\hat v_{\ell-1};\theta_\ell)$. Continuing this approach we have,
\[
\begin{aligned}
\epsilon^*_\ell&=\epsilon^*_{\ell+1}\frac{\partial f_{\ell+1}(\hat v_{\ell};\theta_{\ell+1})}{\partial \hat v_\ell}\\
&=\frac{\partial {\mathcal L}(\hat y,y)}{\partial \hat v_{\ell}}
\end{aligned}
\]
for all $\ell=1,\ldots,L$ (where recall that $\hat y=\hat v_L$ is the output from the feedfoward pass). 
Combining this with the modified definition of $d\theta_\ell$, we have
\[
\begin{aligned}
d\theta_\ell&=-\epsilon^*_\ell\frac{\partial f_\ell(\hat v_{\ell-1};\theta_\ell)}{\partial \theta_\ell}\\
&=-\frac{\partial {\mathcal L}(\hat y,y)}{\partial \hat v_{\ell}}\frac{\partial \hat v_\ell}{\partial \theta_\ell}\\
&=-\frac{\partial {\mathcal L}(\hat y,y)}{\partial \theta_\ell}
\end{aligned}
\]
where we use the chain rule and the fact that $\hat v_\ell=f_\ell(\hat v_{\ell-1};\theta_\ell)$. We may conclude that, if the inference step converges to a fixed point ($dv_\ell=0$), then Algorithm~\ref{A:ModPC}  computes the same values of $d\theta_\ell$ as backpropagation and also that the prediction errors, $\epsilon_\ell$, converge to the gradients, $\delta_\ell=\partial {\mathcal L}/\partial \hat v_\ell$, computed by backpropagation. As long as the inference step  {\it approximately} converges to a fixed point ($dv_\ell\approx 0$), then we should expect the parameter updates from Algorithm~\ref{A:ModPC} to {\it approximate} those computed by backpropagation. In the next section, {I} extend this result to show that a special case of the algorithm computes the true gradients in a fixed number of steps.

{I} next tested Algorithm~\ref{A:ModPC} on MNIST using the same 5-layer convolutional neural network considered above. {I} used a cross-entropy loss function, but otherwise used all of the same parameters used to test Algorithm~\ref{A:PC} in Fig.~\ref{F:PC}. The modified predictive coding algorithm (Algorithm~\ref{A:ModPC}) performed similarly to backpropagation in terms of the loss and accuracy (Fig.~\ref{F:ModPC}A,B). Parameter updates computed by Algorithm~\ref{A:ModPC} did not match the true gradients, but pointed in a similar direction and provided a closer match than Algorithm~\ref{A:PC} (compare Fig.~\ref{F:ModPC}C,D to Fig.~\ref{F:PC}C,D).  Algorithm~\ref{A:ModPC} was similar to Algorithm~\ref{A:PC} in terms of training time (29s for  Algorithm~\ref{A:ModPC} versus 31s for Algorithm~\ref{A:PC} and 8s for backpropagation). %when training metrics were not  computed on every iteration). 
\nameref{S:2} shows the same results from Fig.~\ref{F:ModPC} repeated across 30 trials with different random seeds to quantify the mean and standard deviation across trials.

\begin{figure}
\centering{
  \includegraphics[width=5in]{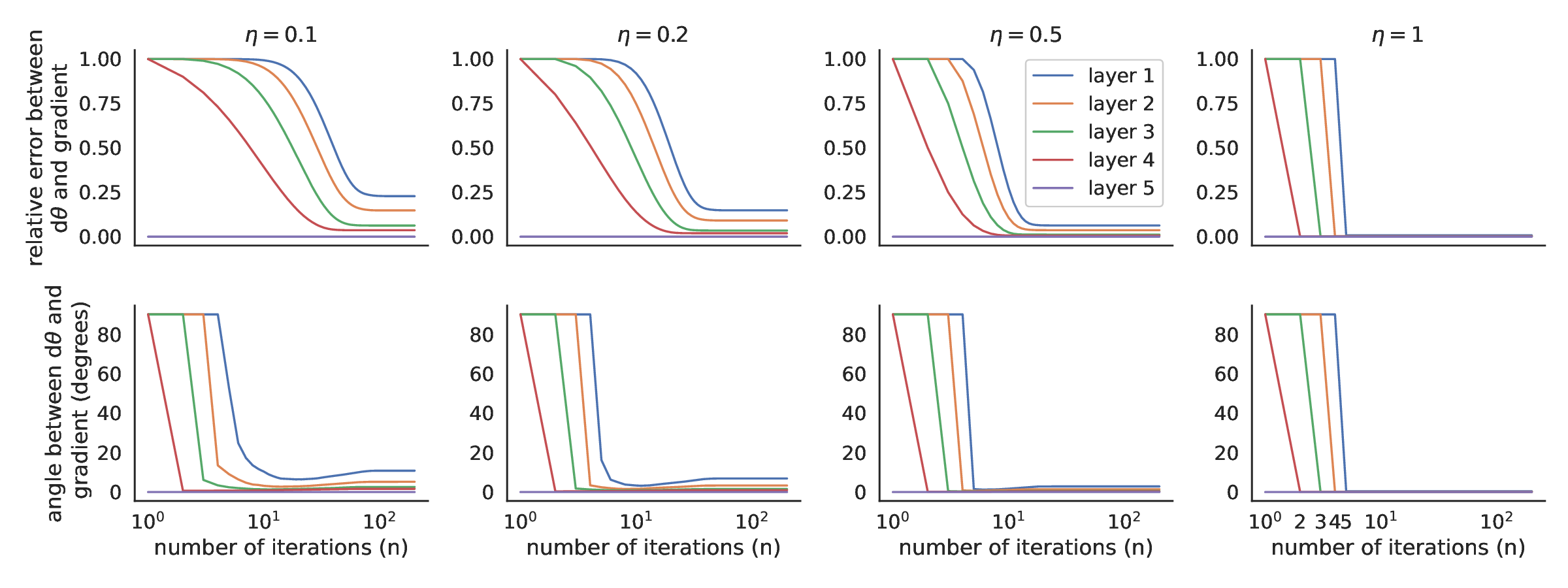}
}
 \caption{{\bf Comparing parameter updates from predictive coding modified by the fixed prediction assumption to true gradients in a network trained on MNIST.} Relative error and angle between $d\theta$ produced by predictive coding modified by the fixed prediction assumption (Algorithm~\ref{A:ModPC}) as compared to the exact gradients computed by backpropagation (relative error defined by $\|d\theta_{pc}-d\theta_{bp}\|/\|d\theta_{bp}\|$). Updates were computed as a function of the number of iterations, $n$, used in Algorithm~\ref{A:ModPC} for various values of the step size, $\eta$, using the model from Fig.~\ref{F:ModPC} applied to one mini-batch of data. Both models were initialized identically to the pre-trained parameter values from the backpropagation-trained model in Fig.~\ref{F:ModPC}. In the rightmost panels, some lines are not visible where they overlap at zero. Parameter updates quickly converge to the true gradients when $\eta$ is larger.
}
 \label{F:ModPCErrs}
 \end{figure}

{I} next compared the parameter updates computed by Algorithm~\ref{A:ModPC} to the true gradients for different values of $n$ and $\eta$ (Fig.~\ref{F:ModPCErrs}).    When  $\eta<1$, the parameter updates, $d\theta_\ell$, appeared to converge, but did not converge exactly to the true gradients. This is likely due to numerical floating point errors accumulated over  iterations. When $\eta=1$, the parameter updates at each layer remained constant for the first few iterations, then immediately jumped to become very near the updates from backpropagation. In the next section, {I} provide a mathematical analysis of this behavior and show that when $\eta=1$, Algorithm~\ref{A:ModPC} computes the true gradients in a fixed number of steps.

To see how well these results extend to a larger model and more difficult benchmark, {I} next tested Algorithm~\ref{A:ModPC} on CIFAR-10~\cite{krizhevsky2009learning} using a six-layer convolutional network. While the network only had one more layer than the MNIST network used above, it had 141 times more parameters (32,695 trainable parameters in the MNIST model versus 4,633,738 in the CIFAR-10 model). %and additionally utilizes batch normalization, which is a non-trivial function for the computation of gradients. 
Algorithm~\ref{A:ModPC} performed similarly to backpropagation in terms of loss and accuracy during learning (Fig.~\ref{F:ModPCCIFAR}A,B) and produced parameter updates that pointed in a similar direction, but still did not match the true gradients (Fig.~\ref{F:ModPCCIFAR}C,D). Algorithm~\ref{A:ModPC} was substantially slower than backpropagation (848s for  Algorithm~\ref{A:ModPC} versus 58s for backpropagation when training metrics were not  computed on every iteration).

%In general, the relative differences for Algorithm~\ref{A:ModPC} were smaller than those for Algorithm~\ref{A:PC} across all layers and values of $n$. 

%For sufficiently small $\eta$ and large $n$, the parameter updates approximate those computed by backpropagation, but larger values of $\eta$ cause the parameter updates to diverge (Fig.~\ref{F:PCErrs}). 

 \begin{figure}
\centering{
 \includegraphics[width=4.5in]{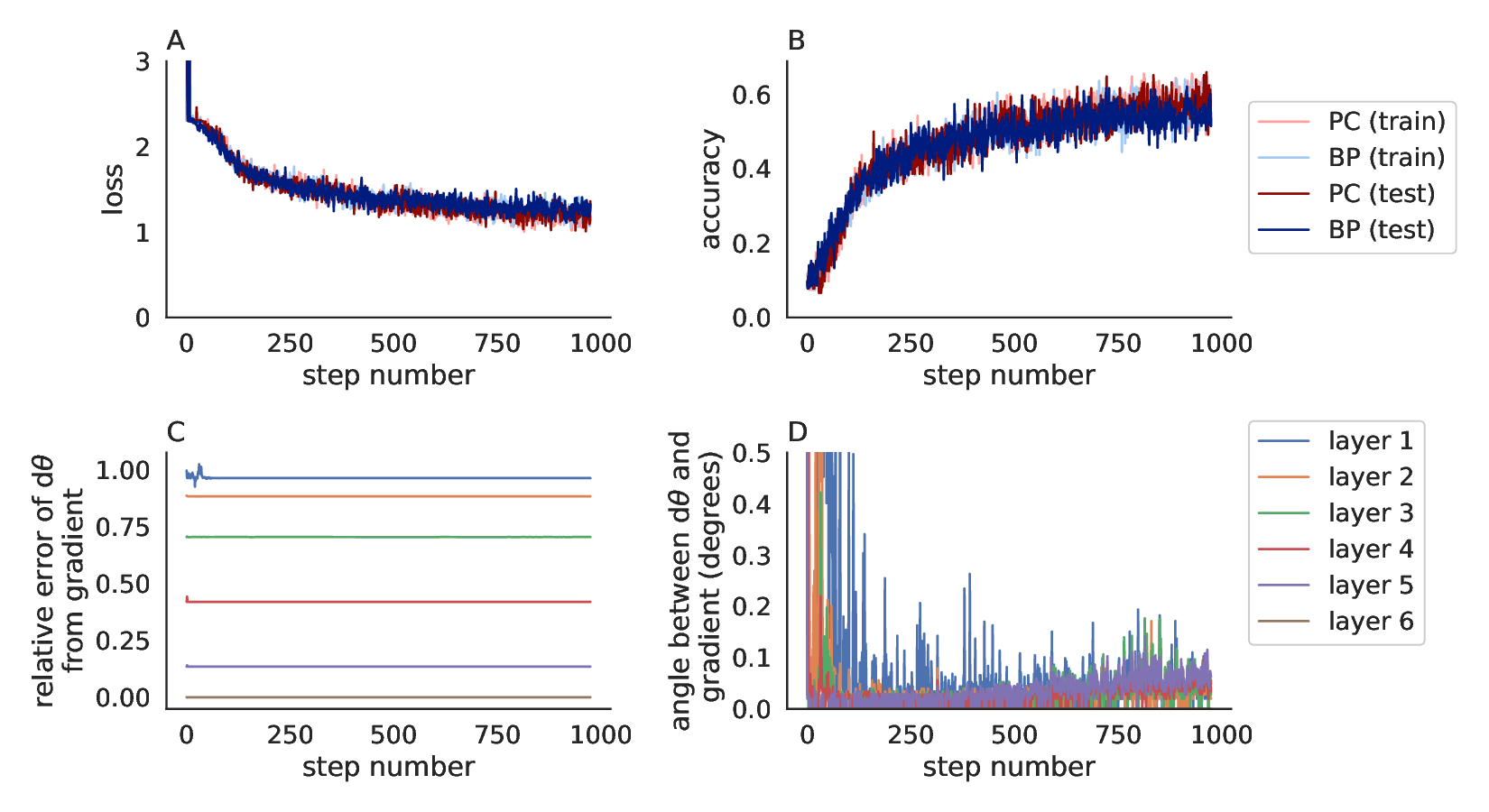}
}
 \caption{{\bf Predictive coding modified by the fixed prediction assumption compared to backpropagation in convolutional neural networks trained on CIFAR-10.} Same as Fig.~\ref{F:ModPC} except a larger network was trained on the CIFAR-10 data set. The accuracy of predictive coding with the fixed prediction assumption is similar to backpropagation and parameter updates are similar to the true gradients.}
 \label{F:ModPCCIFAR}
 \end{figure}

\subsubsection*{Predictive coding modified by the fixed prediction assumption using a step size of $\eta=1$ computes exact gradients in a fixed number of steps.}

A major disadvantage of the approach outlined above -- when compared to standard backpropagation --  is that it requires iterative updates to $v_\ell$ and $\epsilon_\ell$. Indeed, previous work~\cite{millidge2020predictive} used $n=100-200$ iterations, leading to substantially slower performance compared to standard backpropagation. Other work~\cite{whittington2017approximation} used $n=20$ iterations as above. In general, there is a tradeoff between accuracy and performance when choosing $n$, as demonstrated in Fig.~\ref{F:ModPCErrs}. 
However, more recent work~\cite{song2020can,salvatori2021predictive} showed that, under the fixed prediction assumption, predictive coding can compute the exact same gradients computed by backpropagation in a fixed number of steps. That work used a more specific formulation of the neural network which can implement fully connected layers, convolutional layers, and recurrent layers. They also used an unconventional interpretation of neural networks in which weights are multiplied outside the activation function, {\it i.e.}, $f_\ell(x;\theta_\ell)=\theta_\ell g_\ell(x)$, and inputs are fed into the last layer instead of the first. Next, {I} show that their result holds for arbitrary feedforward neural networks as formulated in Eq.~\eqref{E:fwdpass} (with arbitrary functions, $f_\ell$) and this result has a simple interpretation in terms of Algorithm~\ref{A:ModPC}. Specifically, the following theorem shows that taking a step size of $\eta=1$ yields an exact computation of gradients using just $n=L$ iterations (where $L$ is the depth of the network).

%fixed point might not be reached exactly for a fixed number of steps, so the updates might only {approximate} the exact gradients computed by backpropagation.
%Note that $dv_\ell$ no longer represents the gradient of $F$ or any other function that I am aware of. Hence, the inference loop no longer 

%Note that $\hat v_\ell$, $\epsilon_L$, and 
%\[
%\frac{\partial f_{\ell+1}(\hat v_{\ell};\theta_{\ell+1})}{\partial \hat v_\ell}=\frac{\partial \hat v_{\ell+1}}{\partial \hat v_\ell}
%\]
%are held fixed while $v_\ell$ and $\epsilon_\ell$ are updated iteratively ins. 

%Following the same procedures as above, it is easy to check that $\epsilon^*_\ell=\partial {\mathcal L}/\partial \hat v_\ell$ and $d\theta_\ell=\partial {\mathcal L}/\partial \theta_\ell$ if the inference step converges to fixed points, as in Algorithm~\ref{}. However, whenever $\eta=1$, Algorithm~\ref{} actually computes the exact gradients in a fixed number of steps

%To understand why this is the case, 
%let us first consider how $\epsilon_\ell$ evolves during the inference step. At the end of the first iteration through the inference loop ($i=1$), $\epsilon_\ell=0$ for all $\ell=1,\ldots,L-1$ since $v_\ell$ was initialized to $\hat v_\ell$. During subsequent iterations ($i>1$), $\epsilon_\ell$ is updated by the same amount, $dv_\ell$, by which $v_\ell$ was updated on the previous iteration ($i-1$). This can be seen more clearly by 
\begin{theorem}\label{T:1}
If Algorithm~\ref{A:ModPC} is run with step size $\eta=1$ and at least $n=L$ iterations then the algorithm computes
\[
\epsilon_\ell=\frac{\partial {\mathcal L}(\hat y,y)}{\partial \hat v_\ell}
\]
and
\[
d\theta_\ell=-\frac{\partial {\mathcal L}(\hat y,y)}{\partial \theta_\ell}
\]
for all $\ell=1,\ldots,L$ where $\hat v_\ell=f_\ell(\hat v_{\ell-1};\theta_\ell)$ are the results from a forward pass with $\hat v_0=x$ and  $\hat y=\hat v_L=f(x;\theta)$ is the output.
\end{theorem}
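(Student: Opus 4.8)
The plan is to follow the inference loop of Algorithm~\ref{A:ModPC} literally and show that, with $\eta=1$, the beliefs are driven exactly onto the backpropagation fixed point after $L$ sweeps of the inner loop. Throughout the inference loop the forward-pass quantities $\hat v_\ell$, the Jacobians $J_{\ell+1}:=\partial f_{\ell+1}(\hat v_\ell;\theta_{\ell+1})/\partial\hat v_\ell$, and the output error $\epsilon_L$ stay frozen, so the only evolving state is the vector of beliefs; write $v_\ell^{(k)}$ for the value of $v_\ell$ after $k$ complete inner sweeps, with $v_\ell^{(0)}=\hat v_\ell$ the Algorithm~\ref{A:ModPC} initialization, and write $g_\ell:=\partial\mathcal L(\hat y,y)/\partial\hat v_\ell$, so that (as already recorded for Algorithm~\ref{A:backprop}) $g_L=\epsilon_L$ and $g_\ell=g_{\ell+1}J_{\ell+1}$. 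The first step is to compute the effect of one inner sweep. Writing the fixed-prediction belief increment as $dv_\ell=-\epsilon_\ell+\epsilon_{\ell+1}J_{\ell+1}$ with $\epsilon_\ell=v_\ell-\hat v_\ell$, the $\eta=1$ step $v_\ell\gets v_\ell+dv_\ell$ cancels the $-\epsilon_\ell$ term against $v_\ell$ and reduces to $v_\ell\gets\hat v_\ell+\epsilon_{\ell+1}J_{\ell+1}$. The point that must be handled carefully --- and the reason the count is $n=L$ rather than $n=1$ --- is which value of $\epsilon_{\ell+1}$ sits in memory at that instant: in the descending inner loop of Algorithm~\ref{A:ModPC}, $\epsilon_{\ell+1}$ was last recomputed at the top of layer $(\ell+1)$'s step earlier in the same sweep, hence from the value of $v_{\ell+1}$ at the \emph{start} of the current sweep, i.e. from $v_{\ell+1}^{(k-1)}$. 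This gives $v_{L-1}^{(k)}=\hat v_{L-1}+g_LJ_L=\hat v_{L-1}+g_{L-1}$ for every $k\ge1$ and $v_\ell^{(k)}=\hat v_\ell+(v_{\ell+1}^{(k-1)}-\hat v_{\ell+1})J_{\ell+1}$ for $1\le\ell\le L-2$.

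Next I would introduce the backpropagation fixed point $v_\ell^*:=\hat v_\ell+g_\ell$ --- the point at which $\epsilon_\ell^*=g_\ell$, as computed in the paragraphs preceding the theorem --- and measure progress through $w_\ell^{(k)}:=v_\ell^{(k)}-v_\ell^*$. Substituting the one-sweep rule and using $g_\ell=g_{\ell+1}J_{\ell+1}$ collapses everything to the triangular recursion $w_{L-1}^{(k)}=0$ for all $k\ge1$ and $w_\ell^{(k)}=w_{\ell+1}^{(k-1)}J_{\ell+1}$ for $1\le\ell\le L-2$. Unrolling this (equivalently, downward induction on $\ell$) shows that the zero created at layer $L-1$ propagates down one layer per sweep, so $w_\ell^{(k)}=0$ as soon as $k\ge L-\ell$; in particular after $k=L-1$ sweeps every belief equals its fixed-point value $v_\ell^*$, and this holds regardless of the initialization. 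Running one further sweep --- so $n\ge L$ in total --- recomputes $\epsilon_\ell=v_\ell^*-\hat v_\ell=g_\ell$ at the top of each layer's step, after which nothing moves; hence for $n\ge L$ the algorithm ends with $\epsilon_\ell=g_\ell=\partial\mathcal L(\hat y,y)/\partial\hat v_\ell$ for all $\ell=1,\dots,L$ (the case $\ell=L$ holding by the definition of $\epsilon_L$ together with $\hat y=\hat v_L$). A quick check that $n=L-1$ leaves the freshly recomputed $\epsilon_1$ incorrect shows the count is tight for the loop order as written.

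Finally, substituting $\epsilon_\ell=\partial\mathcal L(\hat y,y)/\partial\hat v_\ell$ into $d\theta_\ell=-\epsilon_\ell\,\partial f_\ell(\hat v_{\ell-1};\theta_\ell)/\partial\theta_\ell$ and applying the chain rule through $\hat v_\ell=f_\ell(\hat v_{\ell-1};\theta_\ell)$ --- exactly the manipulation already used in the derivation preceding the theorem --- yields $d\theta_\ell=-\partial\mathcal L(\hat y,y)/\partial\theta_\ell$ for every $\ell$, which is the claim. Since everything is affine in the $v_\ell$ once the Jacobians are frozen, the argument is unchanged when $v_\ell$ is a tensor or a mini-batch of activations, and no special form of $\mathcal L$ is needed. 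I expect the main obstacle to be precisely the bookkeeping flagged above: getting the one-sweep lag in $w_\ell^{(k)}=w_{\ell+1}^{(k-1)}J_{\ell+1}$ right --- a careless reading that feeds the freshly updated $\epsilon_{\ell+1}$ into layer $\ell$ would wrongly predict convergence in a single sweep and lose the factor $L$ --- and checking that this lag, and hence the bound, is the same under the permissible reorderings of the inner-loop updates noted after Algorithm~\ref{A:ModPC}.
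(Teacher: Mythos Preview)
Your proof is correct and follows essentially the same route as the paper's: both carefully track which copy of $\epsilon_{\ell+1}$ is in memory during the descending sweep, derive the one-step recursion (you write it as $w_\ell^{(k)}=w_{\ell+1}^{(k-1)}J_{\ell+1}$, the paper as the equivalent $\epsilon_\ell^{i+1}=\epsilon_{\ell+1}^{\,i}J_{\ell+1}$), and finish by downward induction on $\ell$ to see the correct value propagate one layer per sweep. Your framing via the deviation $w_\ell=v_\ell-v_\ell^*$ from the known fixed point is just a change of variables from the paper's direct tracking of $\epsilon_\ell^i$ (indeed $w_\ell^{(k)}=\epsilon_\ell^{k+1}-g_\ell$), so the arguments are essentially identical; your aside that the convergence in $L$ sweeps does not actually require the initialization $v_\ell^{(0)}=\hat v_\ell$ is a small bonus the paper's proof also implicitly yields but does not state.
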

\begin{proof}
For the sake of notational simplicity within this proof, define $\delta_\ell=\partial {\mathcal L}(\hat v_L,y)/\partial \hat v_\ell$. Therefore, we first need to prove that $\epsilon_\ell=\delta_\ell$. 
First, rewrite the inside of the error and belief loop from Algorithm~\ref{A:ModPC}  while explicitly keeping track of the iteration number in which each variable was updated,
\[
\begin{aligned}
\epsilon^i_\ell&=\hat v_\ell-v^{i-1}_\ell\\
dv^i_\ell&=\epsilon^i_\ell-\epsilon^i_{\ell+1}\frac{\partial f_{\ell+1}(\hat v_{\ell};\theta_{\ell+1})}{\partial \hat v_\ell}\\
v^i_\ell&=v^{i-1}_\ell+ dv^i_\ell.
\end{aligned}
\]
Here, $v^i_\ell$, $\epsilon^i_\ell$, and $dv^i_\ell$ denote the values of $v^i_\ell$, $\epsilon^i_\ell$, and $dv^i_\ell$ respectively at the end of the $i$th iteration,  $v^0_\ell=\hat v_\ell$ corresponds to the initial value, and all terms without superscripts are constant inside the inference loop. There are some subtleties here. For example, we have $v_\ell^{i-1}$ in the first line because $v_\ell$ is updated after $\epsilon_\ell$  in the loop. 
More subtly, we have $\epsilon^i_{\ell+1}$ in the second equation instead of $\epsilon^{i-1}_{\ell+1}$ because the for loop goes backwards from $\ell=L-1$ to $\ell=1$, so $\epsilon_{\ell+1}$ is updated before $\epsilon_{\ell}$. 
First note that
\[
\epsilon^1_\ell=0
\]
for $\ell=1,\ldots,L-1$ because $v_\ell^0=\hat v_\ell$. Now compute the change in $\epsilon_\ell$ across one step,
\[
\begin{aligned}
\epsilon^{i}_\ell-\epsilon^{i+1}_\ell&=v^i_\ell-v^{i-1}_\ell\\
&=dv_\ell^i\\
&=\epsilon^i_\ell-\epsilon^i_{\ell+1}\frac{\partial f_{\ell+1}(\hat v_{\ell};\theta_{\ell+1})}{\partial \hat v_\ell}.
\end{aligned}
\]
Note that this equation is only valid for $i\ge 1$ due to the $i-1$ term ($v^{-1}_\ell$ is not defined). 
Adding $\epsilon^i_\ell$ to both sides of the resulting equation gives
\[
\epsilon^{i+1}_\ell=\epsilon^i_{\ell+1}\frac{\partial f_{\ell+1}(\hat v_{\ell};\theta_{\ell+1})}{\partial \hat v_\ell}.
\]
We now use induction to prove that $\epsilon_\ell=\delta_\ell$ after $n=L$ iterations. Indeed, we prove a stronger claim that $\epsilon^i_\ell=\delta_\ell$ at $i=L-\ell+1$. First note that $\epsilon^i_L=\delta_L$ for all $i$ because $\epsilon^i_L$ is initialized to $\delta_L$ and then never changed. Therefore, our claim is true for the base case $\ell=L$.  

Now suppose that $\epsilon^{i}_{\ell+1}=\delta_{\ell+1}$ for $i=L-(\ell+1)+1=L-\ell$. We need to show that $\epsilon^{i+1}_{\ell}=\delta_{\ell}$. From above, we have
\[
\begin{aligned}
\epsilon^{i+1}_{\ell}&=\epsilon^i_{\ell+1}\frac{\partial f_{\ell+1}(\hat v_{\ell};\theta_{\ell+1})}{\partial \hat v_{\ell}}\\
&=\delta_{\ell+1}\frac{\partial f_{\ell+1}(\hat v_{\ell};\theta_{\ell+1})}{\partial \hat v_{\ell}}\\
&=\frac{\partial {\mathcal L}(\hat y,y)}{\partial \hat v_{\ell+1}}\frac{\partial f_{\ell+1}(\hat v_{\ell};\theta_{\ell+1})}{\partial \hat v_{\ell}}\\
&=\frac{\partial {\mathcal L}(\hat y,y)}{\partial \hat v_{\ell+1}}\frac{\partial \hat v_{\ell+1}}{\partial \hat v_{\ell}}\\
&=\frac{\partial {\mathcal L}(\hat y,y)}{\partial \hat v_{\ell}}\\
&=\delta_{\ell}.
\end{aligned}
\]
This completes our induction argument. It follows that $\epsilon^i_\ell=\delta_\ell$ at iteration $i=L-\ell+1$ at all layers $\ell=1,\ldots,L$. The last layer to be updated to the correct value is $\ell=1$, which is updated on iteration number $i=L-1+1=L$. Hence, $\epsilon_\ell=\delta_\ell$ for all $\ell=1,\ldots,L$ after $n=L$ iterations. This proves the first statement in our theorem. The second statement then follows from the definition of $d\theta_\ell$, 
\[
\begin{aligned}
d\theta_\ell&=-\epsilon_\ell\frac{\partial f_\ell(\hat v_{\ell-1};\theta_\ell)}{\partial \theta_\ell}\\
%&=-\delta_\ell \frac{\partial f_\ell(\hat v_{\ell-1};\theta_\ell)}{\partial \theta_\ell}\\
&=-\frac{\partial {\mathcal L}(\hat y,y)}{\partial \hat v_{\ell}}\frac{\partial f_\ell(\hat v_{\ell-1};\theta_\ell)}{\partial \theta_\ell}\\
&=-\frac{\partial {\mathcal L}(\hat y,y)}{\partial \hat v_{\ell}}\frac{\partial \hat v_{\ell}}{\partial \theta_\ell}\\
&=-\frac{\partial {\mathcal L}(\hat y,y)}{\partial \theta_\ell}.
\end{aligned}
\]
This completes the proof.
\end{proof}

This theorem ties together the implementation and formulation of predictive coding from~\cite{millidge2020predictive} ({\it i.e.}, Algorithm~\ref{A:ModPC}) to the results in~\cite{salvatori2021predictive,song2020can}. As noted in~\cite{salvatori2021predictive,song2020can}, this result depends critically on the assumption that the values of $v_\ell$ are initialized to the activations from a forward pass, $v_\ell=\hat v_\ell$ initially. 
The theoretical predictions from Theorem~\ref{T:1} are confirmed by the fact that all of the errors in the rightmost panels of Fig.~\ref{F:ModPCErrs} converge to zero after $n=L=5$ iterations.

 \begin{figure}
\centering{
 \includegraphics[width=5in]{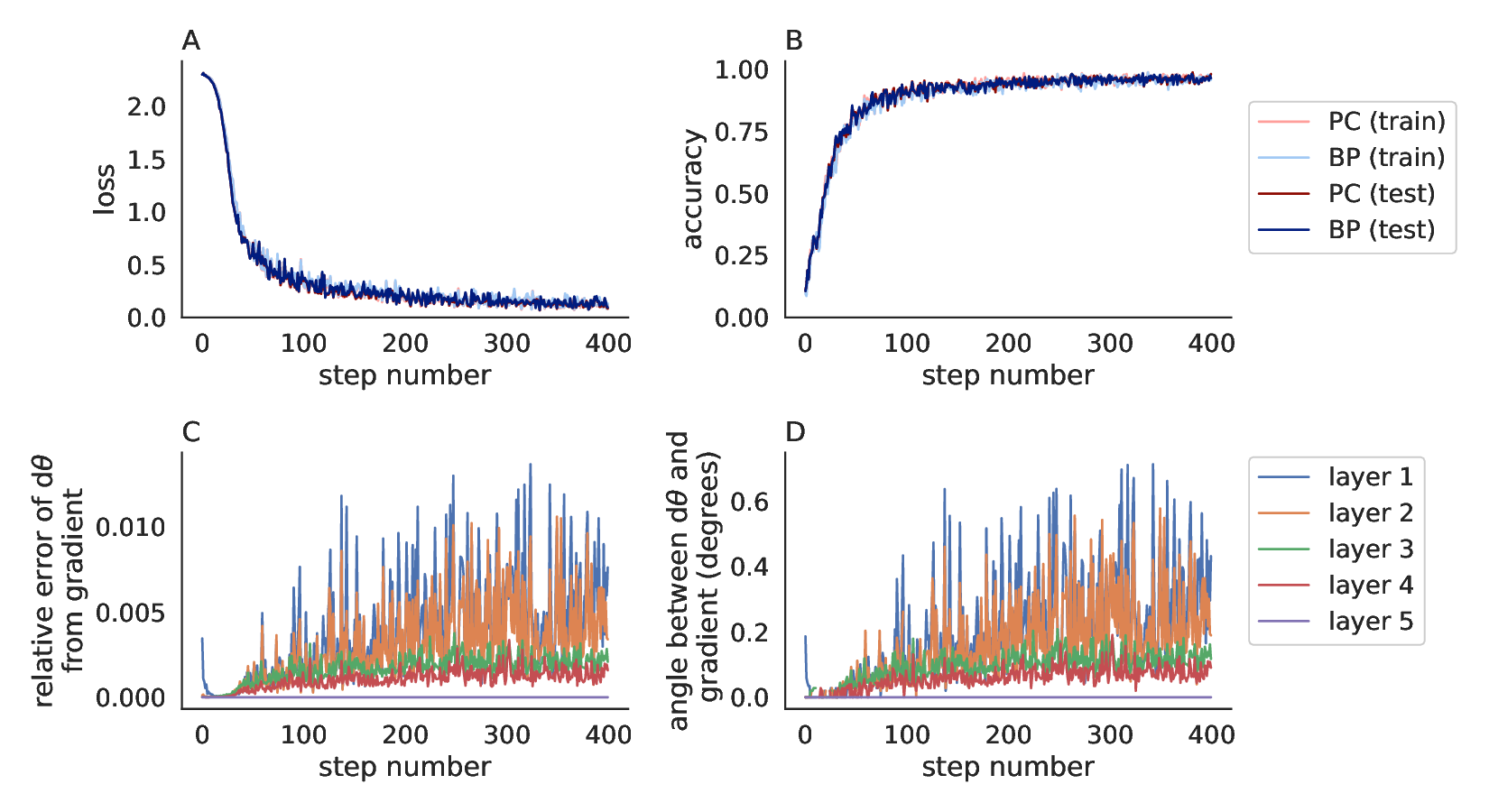}
 }
 \caption{{\bf Predictive coding modified by the fixed prediction assumption with $\eta=1$ compared to backpropagation in convolutional neural networks trained on MNIST.} Same as Fig.~\ref{F:ModPC} except $\eta=1$ and $n=L$.  Predictive coding with the fixed prediction assumption approximates true gradients accurately when $\eta=1$.}
 \label{F:ModPCeta1}
 \end{figure}
 
To further test the result empirically, {I} repeated Figs.~\ref{F:ModPC} and \ref{F:ModPCCIFAR} using $\eta=1$ and $n=L$ (in contrast to Figs.~\ref{F:ModPC}  and \ref{F:ModPCCIFAR} which used $\eta=0.1$ and $n=20$). The loss and accuracy closely matched those computed by backpropagation (Figs.~\ref{F:ModPCeta1}A,B and \ref{F:ModPCeta1CIFAR}A,B). More importantly, the parameter updates closely matched the true gradients (Fig.~\ref{F:ModPCeta1}C,D and \ref{F:ModPCeta1CIFAR}C,D), as predicted by Theorem~\ref{T:1}. 
The differences between predictive coding and backpropagation in Fig.~\ref{F:ModPCeta1} were due floating point errors and the non-determinism of computations performed on GPUs. For example, similar differences to those seen in Fig.~\ref{F:ModPCeta1}A,B were present when the same training algorithm was run twice with the same random seed.  
The smaller number of iterations ($n=L$ in Figs.~\ref{F:ModPCeta1} and \ref{F:ModPCeta1CIFAR}  versus $n=20$ in Figs.~\ref{F:ModPC} and \ref{F:ModPCCIFAR}) resulted in a shorter training time (13s for MNIST and 300s for CIFAR-10 for Figs.~\ref{F:ModPCeta1} and \ref{F:ModPCeta1CIFAR}, compare to 29s and 848s in Figs \ref{F:ModPC} and \ref{F:ModPCCIFAR}, and compare to 8s and 58s for backpropagation). 

In summary, a review of the literature shows that a strict interpretation of predictive coding (Algorithm~\ref{A:PC}) does not converge to the true gradients computed by backpropagation. To compute the true gradients, predictive coding must be modified by the fixed prediction assumption (Algorithm~\ref{A:ModPC}). Further, {I} proved that Algorithm~\ref{A:ModPC} computes the exact gradients when $\eta=1$ and $n\ge L$, which ties together results from previous work~\cite{millidge2020predictive,salvatori2021predictive,song2020can}.

 \begin{figure}
\centering{
 \includegraphics[width=5in]{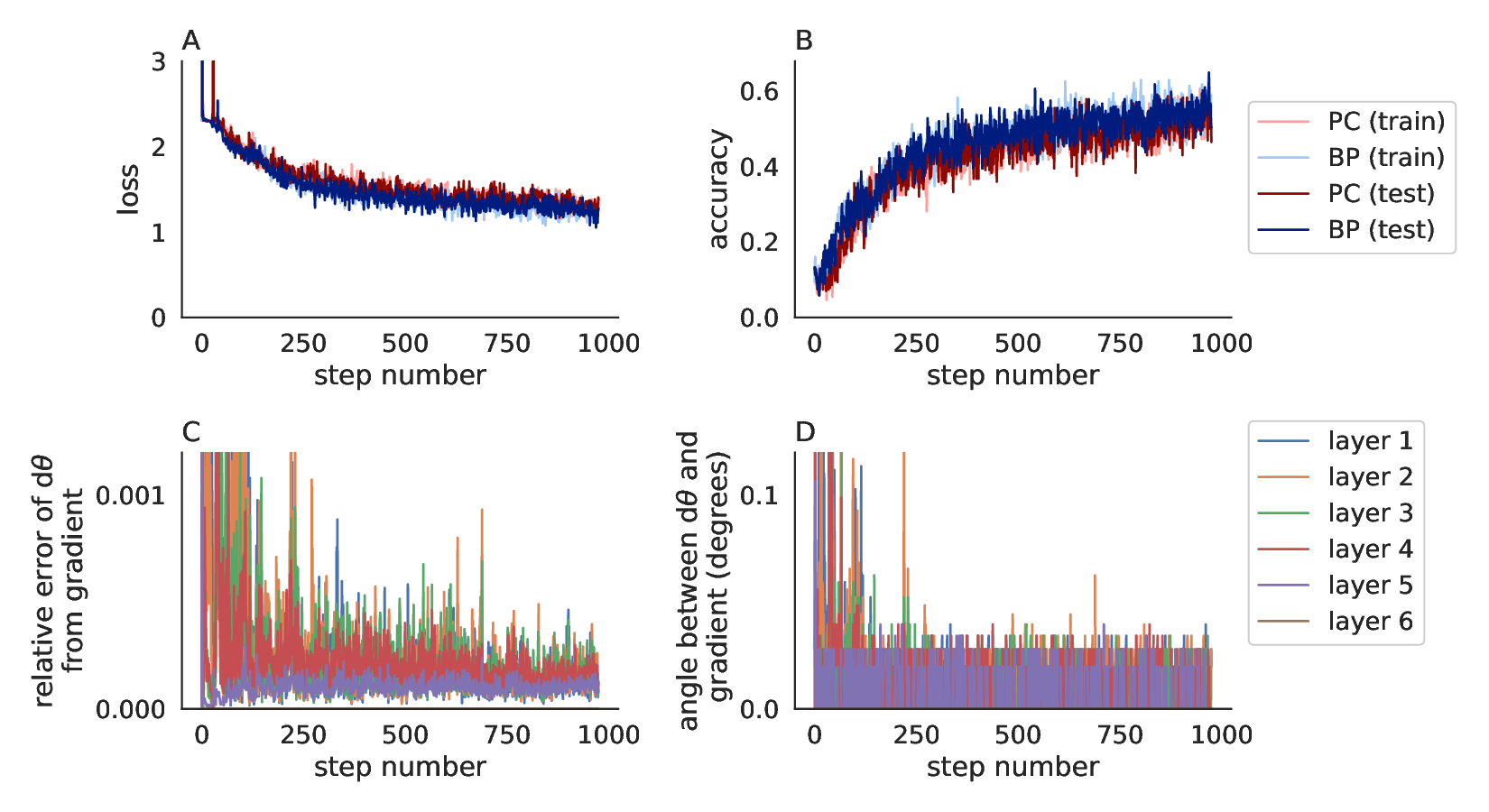}
 }
 \caption{{\bf Predictive coding modified by the fixed prediction assumption with $\eta=1$ compared to backpropagation in convolutional neural networks trained on CIFAR-10.} Same as Fig.~\ref{F:ModPCCIFAR} except $\eta=1$ and $n=L$.  Predictive coding with the fixed prediction assumption approximates true gradients accurately when $\eta=1$.}
 \label{F:ModPCeta1CIFAR}
 \end{figure}

%\section*{Perspectives and interpretations of the relationship between predictive coding, backpropagation, and biological neuronal networks}

\subsection*{Predictive coding with the fixed prediction assumption and $\eta=1$ is functionally equivalent to a direct implementation of backpropagation.}

The proof of Theorem~\ref{T:1} and the last panel of Fig.~\ref{F:ModPCErrs} give some insight into a how Algorithm~\ref{A:ModPC} works. First note that the values of $v_\ell$ in Algorithm~\ref{A:ModPC} are only used to compute the values of $\epsilon_\ell$ and are not otherwise used in the computation of $d\theta_\ell$ or any other quantities. Therefore, if we only care about understanding parameter updates, $d\theta_\ell$, we can ignore the values of $v_\ell$ and only focus on how $\epsilon_\ell$ is updated on each iteration, $i$. Secondly, note that when $\eta=1$, each $\epsilon_\ell$ is updated only once: $\epsilon^i_\ell=0$ for $i<L-\ell+1$ and $\epsilon^i_\ell=\epsilon^i_{\ell+1}{\partial f_{\ell+1}(\hat v_{\ell};\theta_{\ell+1})}/{\partial \hat v_{\ell}}$ for $i\ge L-\ell+1$, so $\epsilon_\ell$ is only changed on iteration number $i=L-\ell+1$. In other words, the error computation in Algorithm~\ref{A:ModPC} when $\eta=1$ and $n=L$ is equivalent to
\begin{algorithmic}%[1]
 \vspace{.04in}
\State {\color{gray}\# error computation}
\State $\epsilon_L=\tfrac{\partial {\mathcal L}(\hat v_L,y)}{\partial  \hat v_L}$
\For{$i=1,\ldots,L$}
	\For{$\ell=L-1,\ldots,1$}
		\If{$\ell==L-i+1$}
			\State $\epsilon_\ell=\epsilon_{\ell+1}\tfrac{\partial f_{\ell+1}(\hat v_{\ell};\theta_{\ell+1})}{\partial \hat v_\ell}$
		\EndIf
	\EndFor
\EndFor
\end{algorithmic}
The two computations are equivalent in the sense that they compute the same values of the errors, $\epsilon_\ell^i$, on every iteration. The formulation above makes it clear that the nested loops are unnecessary because for each value of $i$, $\epsilon_\ell$ is only updated at one value of $\ell$. Therefore, the nested loops and if-statement can be replaced by a single for-loop. Specifically, the error computation  in Algorithm~\ref{A:ModPC} when $\eta=1$ is equivalent to
\begin{algorithmic}%[1]
 \vspace{.04in}
\State {\color{gray}\# error computation}
\State $\epsilon_L=\tfrac{\partial {\mathcal L}(\hat v_L,y)}{\partial  \hat v_L}$
\For{$\ell=L-1,\ldots,1$}
	\State $\epsilon_\ell=\epsilon_{\ell+1}\tfrac{\partial f_{\ell+1}(\hat v_{\ell};\theta_{\ell+1})}{\partial \hat v_\ell}$
\EndFor
\end{algorithmic}
This is {\it exactly} the error computation from the standard backpropagation algorithm, {\it i.e.}, Algorithm~\ref{A:backprop}. Hence, if we use $\eta=1$, then Algorithm~\ref{A:ModPC} is just  backpropagation with extra steps and these extra steps do not compute any non-zero values. If we additionally want to compute the fixed point beliefs, then they can still be computed using the relationship
\[
v_\ell=\hat v_\ell-\epsilon_\ell.
\]
We may conclude that, when $\eta=1$, Algorithm~\ref{A:ModPC} can be replaced by an exact implementation of backpropagation without any effect on the results or effective implementation of the algorithm. This raises the question of whether predictive coding with the fixed prediction assumption should be considered any more biologically plausible than a direct implementation of backpropagation.

%%%%%%%%%%%

\subsection*{Accounting for covariance or precision matrices in hidden layers does not affect learning under the fixed prediction assumption.}

Above, {I} showed that predictive coding with the fixed prediction assumption is functionally equivalent to backpropagation. However, the predictive coding algorithm was derived under an assumption that covariance matrices in the probabilistic  model are identity matrices, $\Sigma_\ell=I$. This raises the question of whether relaxing this assumption could generalize backpropagation to account for the covariances, as suggested in previous work~\cite{whittington2017approximation,millidge2020predictive,millidge2021predictive}. 

We can account for covariances by returning to the calculations starting from the probabilistic model in Eq~\eqref{E:ProbModel}  and omit the assumption that $\Sigma_\ell=I$. To this end, it is helpful to define the precision-weighted prediction errors~\cite{bogacz2017tutorial,buckley2017free,millidge2021predictive}, 
\[
\tilde \epsilon_\ell=\epsilon_\ell P_\ell
\]
for $\ell=1,\ldots,L-1$ where $P_\ell=\Sigma_\ell^{-1}$ is the inverse of the covariance matrix of $V_\ell$, which is called ``precision matrix.''  Recall that we treat $\epsilon_\ell$ as a row-matrix, which explains the right-multiplication in this definition. 

Modifying  the definition of $\epsilon_L$ to account for covariances is not so simple because the Gaussian model for $V_\ell$ is not justified for non-Euclidean loss functions such as categorical loss functions. Moreover, it is not clear how to define the covariance or precision matrix of the output layer when labels are observed. As such, {I} restrict to accounting for precision matrices in hidden layers only, and leave the question of accounting for covariances in the output layer for future work with some comments on the issue provided at the end of this section. 
To this end, let us not modify the last layer's precision and instead define 
\[
\tilde \epsilon_L=\epsilon_L=\frac{\partial L(\hat y,y)}{\partial \hat y}.
\] 
The free energy is then defined as~\cite{bogacz2017tutorial,buckley2017free}
\[
F=\frac{1}{2}\sum_{\ell=1}^L \|\tilde \epsilon_\ell\|^2.
\]
Performing gradient descent on $F$ with respect to $v_\ell$ therefore gives
\[
dv_\ell= \tilde \epsilon_\ell - \tilde \epsilon_{\ell+1}\frac{\partial f_{\ell+1}(v_\ell;\theta_{\ell+1})}{\partial v_\ell}
\]
and performing gradient descent on $F$ with respect to $\theta_\ell$ gives 
\[
d\theta_\ell= -\tilde \epsilon_\ell \frac{\partial f_{\ell}(v_{\ell-1};\theta_{\ell})}{\partial \theta_\ell}.
\]
These expressions are identical to Eqs.~\eqref{E:dvell} and \eqref{E:dtheta} derived above except that $\tilde \epsilon_\ell$ takes the place of $\epsilon_\ell$. 

The precision matrices themselves can be learned by performing gradient descent on $F$ with respect to $P_\ell$ or, as suggested in other work~\cite{bogacz2017tutorial}, by parameterizing the model in terms of $\Sigma_\ell=P_\ell^{-1}$ and performing gradient descent with respect to $\Sigma_\ell$.   Alternatively, one could use techniques from the literature on Gaussian graphical models to learn a sparse or low-rank representation of $P_\ell$.  {I} circumvent the question of estimating $P_\ell$ by instead just asking how an estimate of $P_\ell$ (however it is obtained) would affect learning. {I} do assume that $P_\ell$ is symmetric. {I} also simplify the calculations by restricting the analysis to predictive coding with the fixed prediction assumption, leaving the analysis of fixed point prediction errors and parameter updates under strict predictive coding with precisions matrices for future work. Some analysis has been performed in this direction~\cite{bogacz2017tutorial}, but not for the supervised learning scenario considered here. 

Putting this together, predictive coding under the fixed prediction assumption while accounting for precision matrices in hidden layers is defined by the following equations
\[
\begin{aligned}
\tilde \epsilon_\ell&=\left[\hat v_\ell-v_\ell\right]P_\ell\\
\tilde \epsilon_L&=\frac{\partial {\mathcal L}(\hat v_L,y)}{\partial \hat v_L}\\
dv_\ell&=\tilde \epsilon_\ell-\tilde \epsilon_{\ell+1}\frac{\partial f_{\ell+1}(\hat v_{\ell};\theta_{\ell+1})}{\partial \hat v_\ell}\\
d\theta_\ell&=-\tilde \epsilon_\ell\frac{\partial f_\ell(\hat v_{\ell-1};\theta_\ell)}{\partial \theta_\ell}
\end{aligned}
\]
The only difference between these equations and Eqs.~\eqref{E:ModPC} is that they use $\tilde \epsilon_\ell$ in place of $\epsilon_\ell=\hat v_\ell-v_\ell$. Following the same line of reasoning, therefore, if the updates to $v_\ell$ are repeated until convergence, then fixed point precision-weighted prediction errors satisfy
\[
\tilde \epsilon^*_\ell=\tilde \epsilon^*_{\ell+1}\frac{\partial f_{\ell+1}(\hat v_{\ell};\theta_{\ell+1})}{\partial \hat v_\ell}.
\]
Notably, this is the same equation derived for $\epsilon_\ell$ under the fixed prediction assumption with $\Sigma_\ell=I$, so fixed point precision-weighted prediction errors are also the same,
\[
\tilde \epsilon^*_\ell=\frac{\partial {\mathcal L}(\hat y,y)}{\partial \hat v_{\ell}}
\]
and, therefore, parameter updates are the same as well,
\[
d\theta_\ell=-\frac{\partial {\mathcal L}(\hat y,y)}{\partial \theta_\ell}.
\]
In conclusion, accounting for precision matrices in hidden layers does not affect learning under the fixed prediction assumption. Fixed point parameter updates are still the same as those computed by backpropagation.  This conclusion is independent of how the precision matrices are estimated, but it does rely on the assumption that fixed points for $v_\ell$ exist and are unique. 

Above, we only considered precision matrices in the hidden layers because accounting for precision matrices in the output layer is problematic for general loss functions. The use of a precision matrix in the output implies the use of a Gaussian model for the output layer and labels, which is inconsistent with some types of  labels and loss functions. If we focus on the case of a squared-Euclidean loss function,
\[
{\mathcal L}(\hat y,y)=\frac{1}{2}\|\hat y-y\|^2,
\]
then the use of precision  matrices in the output layer is more parsimonious and we can define 
\[
\tilde \epsilon_L=\left[\hat v_L-y\right]P_L=\frac{\partial \mathcal L(\hat y,y)}{\partial \hat y}P_L
\]
 in place of the definition above (recalling that $\hat y=\hat v_L$). Following the same calculations as above, gives fixed points of the form
\[
\tilde \epsilon^*_\ell=\frac{\partial \mathcal L(\hat y,y)}{\partial \hat y}P_L\frac{\partial \hat y}{\partial \hat v_\ell}
\]
and, therefore, weight updates take the form 
\[
d\theta_\ell=-\frac{\partial \mathcal L(\hat y,y) }{\partial \hat y}P_L\frac{\partial \hat y}{\partial \theta_\ell}.
\]
at the fixed point. Hence, accounting for precision matrices at the output layer can affect learning by re-weighting the gradient of the loss function according to the precision matrix of the output layer. Note that  the precision matrices of the hidden layers still have no effect on learning in this case.   Previous work relates the inclusion of the precision matrix in output layers with the use of natural gradients~\cite{amari1995information,millidge2021predictive}.

%%%%%%%%%%%%

\subsection*{Prediction errors do not necessarily represent surprising or unexpected features of inputs.}

Deep neural networks are often interpreted as abstract models of cortical neuronal networks. To this end, the activations of units in deep neural networks are compared to the activity (typically firing rates) of cortical neurons~\cite{SchrimpfKubilius2018BrainScore,Schrimpf2020integrative,lillicrap2020backpropagation}. This approach ignores the representation of errors within the network. More generally, the activations in one particular layer of a feedforward deep neural network contain no information about the activations of deeper layers, the label, or the loss. On the other hand, the activity of cortical neurons can be modulated by downstream activity and information believed to be passed upstream by feedback projections. Predictive coding provides a precise model for the information that deeper layers send to shallower layers, specifically prediction errors.

Under the fixed prediction assumption (Algorithm~\ref{A:ModPC}), prediction errors in a particular layer are  approximated by the gradients of that layers' activations with respect to the loss function, $\epsilon_\ell= \delta_\ell=\frac{\partial {\mathcal L}}{\partial \hat v_\ell}$, but under a strict interpretation of predictive coding (Algorithm~\ref{A:PC}), prediction errors do not necessarily reflect gradients. 
We next empirically explored how the representations of images differ between the activations from a feedforward pass, $\hat v_\ell$,  the prediction errors under the fixed prediction assumption, $\epsilon_\ell= \delta_\ell$, as well as the beliefs, $v_\ell$, and prediction errors, $\epsilon_\ell$, under a strict interpretation of predictive coding (Algorithm~\ref{A:PC}). To do so, we computed each quantity in VGG-19~\cite{simonyan2014very}, which is a large, feedforward convolutional neural network (19 layers and 143,667,240 trainable parameters) pre-trained on ImageNet~\cite{russakovsky2015imagenet}.

The use of convolutional layers allowed us to visualize the activations and prediction errors in each layer. Specifically, we took the Euclidean norm of each quantity across all channels and plotted them as two-dimensional images for layers $\ell=1$ and $\ell=10$ and for two different input images (Fig.~\ref{F:VGG19}). For each image and each layer (each row in Fig.~\ref{F:VGG19}), we computed the Euclidean norm of four quantities. First, we computed the activations from a forward pass through the network ($\hat v_\ell$, second column). Under predictive coding with the fixed prediction assumption (Algorithm~\ref{A:ModPC}), we can interpret the activations, $\hat v_\ell$, as ``beliefs'' and the gradients, $\delta_\ell$, as ``prediction errors.'' Strictly speaking, there is a distinction between the beliefs, $\hat v_\ell$, from a feedforward pass and the beliefs, $v_\ell=\hat v_\ell+\epsilon_\ell$, when labels are provided. Either could be interpreted as a ``belief.'' However, we found that the difference between them was negligible  for the examples  considered here. 

Next, we computed the gradients of the loss with respect to the activations ($\delta_\ell$, third column in Fig.~\ref{F:VGG19}). The theory and simulations above and from previous work confirms that these gradients approximate the prediction errors from predictive coding with the fixed prediction assumption (Algorithm~\ref{A:ModPC}). Indeed, for the examples considered here, the differences between the two quantities were negligible. Next, we computed the beliefs ($v_\ell$, fourth column in Fig.~\ref{F:VGG19}) computed by strict predictive coding (Algorithm~\ref{A:PC}). Finally, we computed the prediction errors ($\epsilon_\ell$, last column in Fig.~\ref{F:VGG19}) computed by strict predictive coding (Algorithm~\ref{A:PC}).

Note that we used a VGG-19 model that was pre-trained using backpropagation. Hence, the weights were not necessarily the same as the weights that would be obtained if the model were trained using predictive coding, particularly strict predictive coding (Algorithm~\ref{A:PC}) which does not necessarily converge to the true gradients. Training a large ImageNet model like VGG-19 with predictive coding is extremely computationally expensive. Regardless, future work should address the question of whether using pre-trained weights (versus weights trained by predictive coding) affects the conclusions reached here.

\begin{figure}
\centering{
 \includegraphics[width=5in]{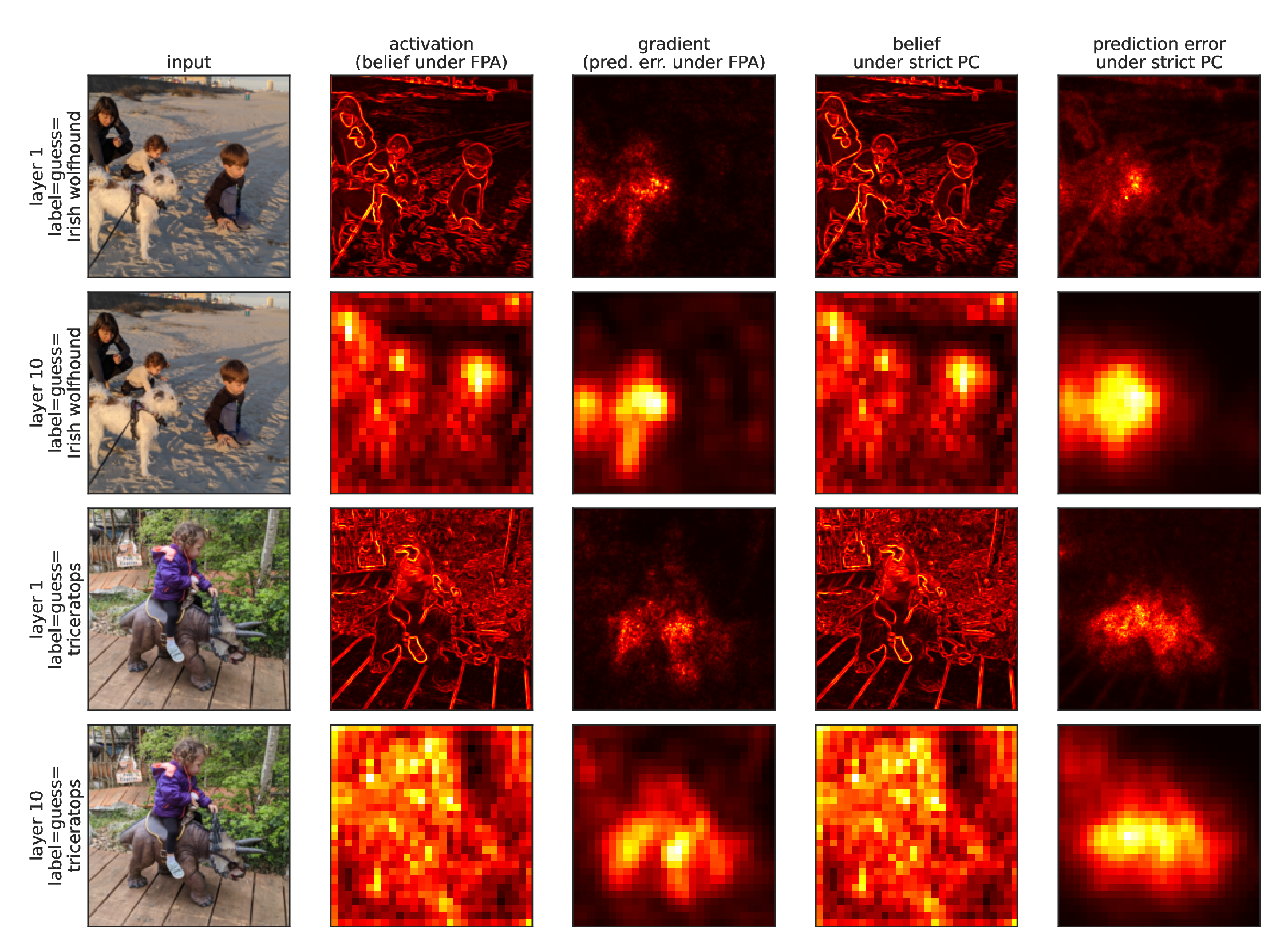}
 }
 \caption{{\bf Magnitude of activations, beliefs, and prediction errors in a convolutional neural network pre-trained on ImageNet.} The Euclidean norm of feedforward activations ($\hat v$, interpreted as beliefs  under the fixed prediction assumption), gradients of the loss with respect to activations ($\delta_\ell=\partial {\mathcal L}/\partial \hat v$, interpreted as prediction errors  under the fixed prediction assumption), beliefs ($v$) under strict predictive coding, and prediction errors ($\epsilon_\ell)$) under strict predictive coding  computed from the VGG-19 network~\cite{simonyan2014very} pre-trained on ImageNet~\cite{russakovsky2015imagenet} with  two different photographs as inputs at two different layers. The vertical labels on the left (``triceratops'' and ``Irish wolfhound'') correspond to the guessed label which was also used as  the ``true'' label ($y$) used to compute the gradients.   }
 \label{F:VGG19}
 \end{figure}

Overall, the activations, $\hat v_\ell$, from a feedforward pass were qualitatively very similar to the beliefs, $v_\ell$, computed under a strict interpretation of predictive coding (Algorithm~\ref{A:PC}). To a slightly lesser degree, the gradients, $\delta_\ell$, from a feedforward pass were qualitatively similar to the prediction errors computed under a strict interpretation of predictive coding (Algorithm~\ref{A:PC}). Since $\hat v_\ell$ and $\delta_\ell$ approximate beliefs and prediction errors under the fixed prediction assumption, these observations confirmed that the fixed prediction assumption does not make large qualitative changes to the representation of beliefs and errors in these examples. Therefore, in the discussion below, we  used ``beliefs'' and ``prediction errors'' to refer to the quantities from both models.

Interestingly, prediction errors were non-zero even when the image and the network's ``guess'' was consistent with the label (no ``mismatch''). Indeed, the prediction errors were largest in magnitude at pixels corresponding to the object predicted by the label, {\it i.e.}, at the most predictable regions. While this observation is an obvious consequence of the fact that prediction errors are approximated by the gradients, $\delta_\ell=\frac{\partial {\mathcal L}}{\partial \hat v_\ell}$, it is contradictory to the heuristic or intuitive interpretation of prediction errors as measurements of ``surprise'' in the colloquial sense of the word~\cite{friston2010free}. 

As an illustrative example from Fig.~\ref{F:VGG19}, it is not surprising that an image labeled by ``triceratops'' contains a triceratops, but this does not imply a lack of prediction errors because the space of images containing a triceratops is large and any one image of a triceratops is not wholly representative of the label. Moreover, the pixels to which the loss is most sensitive are those pixels containing the triceratops. Therefore those pixels give rise to larger values of $\epsilon_\ell\approx \delta_\ell=\partial \mathcal L/\partial \hat v_\ell$. Hence, in high-dimensional sensory spaces, predictive coding models do not necessarily predict that prediction error units encode ``surprise'' in the colloquial sense of the word.

In both examples in Fig.~\ref{F:VGG19}, we used an input, $y$, that matched the network's ``guessed'' label, {\it i.e.}, the label to which the network assigned the highest probability ($\argmax(\hat y)$). Prediction errors are often discussed in the context of mismatched stimuli in which top-down input is inconsistent with bottom-up predictions~\cite{hertag2020learning,gillon2021learning,keller2012sensorimotor,zmarz2016mismatch,attinger2017visuomotor,homann2017predictive}. Mismatches can be modeled by  taking a label that is different from the network's guess. In Fig.~\ref{F:VGG19MM}, we visualized the prediction errors in response to matched and mismatched labels. The network assigned a probability of $p=0.9991$ to the label ``carousel'' and a probability of $p=3.63\times 10^{-8}$ to the label ``bald eagle''. The low probability assigned to ``bald eagle'' is, at least in part, a consequence of the network being trained with a softmax loss function, which implicitly assumes one label per input.
When we applied the mismatched label ``bald eagle,'' prediction errors were larger in pixels that are salient for that label ({\it e.g.}, the bird's white head, which is  a defining feature of a bald eagle). Moreover, the prediction errors as a whole are much larger in magnitude in response to the mismatched label (see the scales of the color bars in Fig.~\ref{F:VGG19MM}).

\begin{figure}
\centering{
\includegraphics[width=5in]{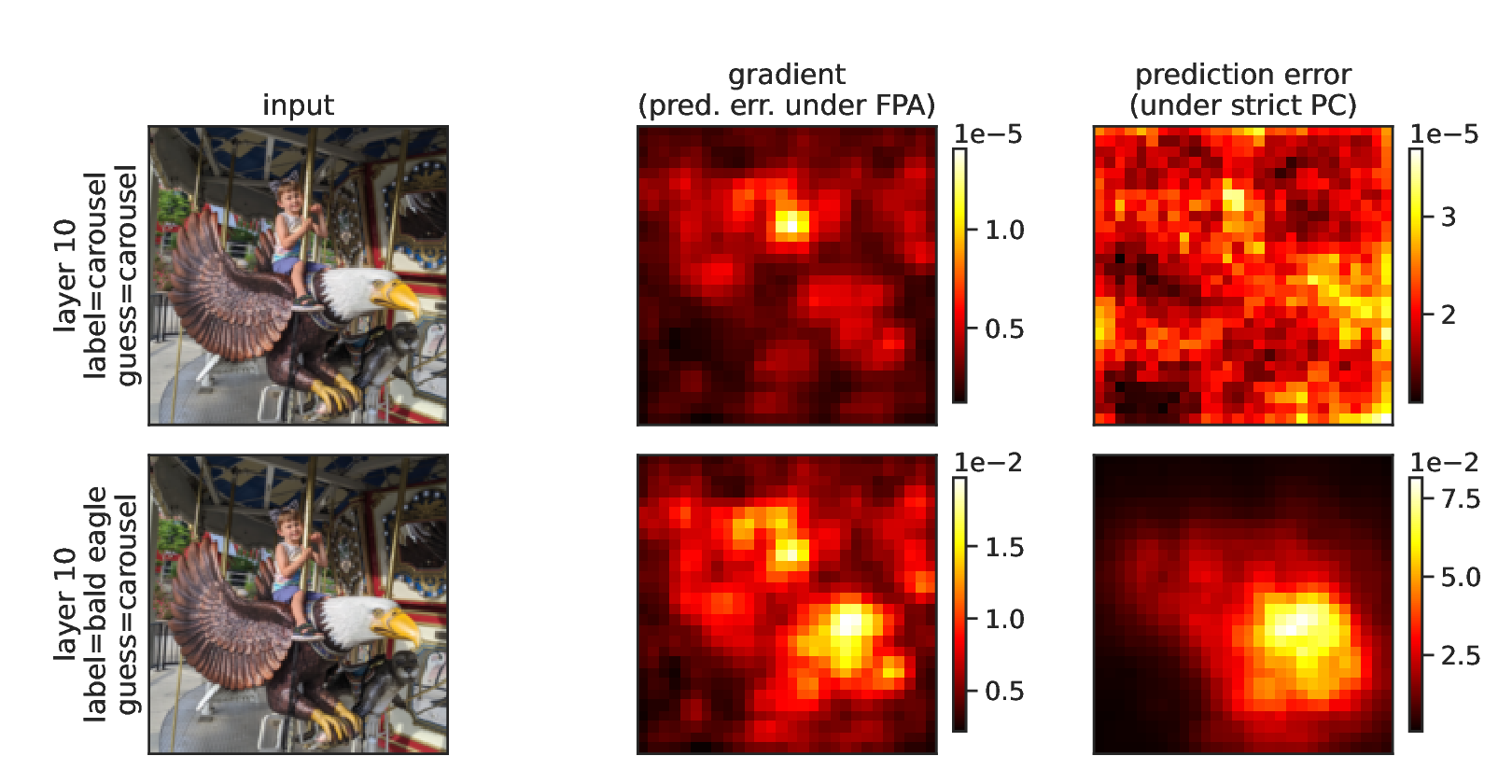}
 }
 \caption{{\bf Magnitude of activations, beliefs, and prediction errors in response to matched and mismatched inputs and labels.} Same as Fig.~\ref{F:VGG19}, but for the bottom row the label did not match the network's guess.  }
 \label{F:VGG19MM}
 \end{figure}

%Several striking differences between the beliefs and prediction errors are visible in Fig.~\ref{F:VGG19}. 

%where recall that $\hat v_\ell$ is the vector of activations in layer $\ell$ from a forward pass. 

In summary, the relationship between prediction errors and gradients helped demonstrate that  prediction errors sometimes, but do not always conform to their common interpretation as unexpected features of a bottom-up input in the context of a top-down input. Also, beliefs and prediction errors were qualitatively similar with and without the fixed prediction assumption for the examples considered here.

\section*{Discussion}

We reviewed and extended previous work~\cite{ whittington2017approximation,millidge2020predictive,song2020can,salvatori2021predictive} on the relationship between predictive coding and backpropagation for learning in neural networks. Our results demonstrated that a strict interpretation of predictive coding does not accurately approximate backpropagation, but is still capable of learning (Figs.~\ref{F:PC} and \ref{F:PCErrs}). 
Previous work  proposed a modification to predictive coding called the ``fixed prediction assumption'' which causes predictive coding to converge to the same parameter updates produced by backpropagation, under the assumption that the predictive coding iterations converge to fixed points. 
Hence, the relationship between predictive coding and backpropagation identified in previous work relies critically on the fixed prediction assumption. Formal derivations of predictive coding in terms of  variational inference~\cite{buckley2017free} do not produce the fixed prediction assumption. It is possible that an alternative probabilistic model or alternative approaches to the variational formulation could help formalize a model of predictive coding under the fixed prediction assumption. 

We proved analytically and verified empirically that taking a step size of $\eta=1$ in the modified predictive coding algorithm computes the exact gradients computed by backpropagation in a fixed number of steps (modulo floating point numerical errors). This result is consistent with similar, but slightly less general, results in previous work~\cite{salvatori2021predictive,song2020can}. 

A closer inspection of the the fixed prediction assumption with $\eta=1$ showed that it is algorithmically equivalent to a direct implementation of backpropagation. As such, any potential neural architecture and machinery that could be to implement predictive coding with the fixed prediction assumption could also implement backpropagation directly. This result calls into question whether predictive coding with the fixed prediction assumption is any more biologically plausible than a direct implementation of backpropagation.

Visualizing the beliefs and prediction errors produced by predictive coding models applied to a large convolutional neural network pre-trained on ImageNet showed that beliefs and prediction errors were activated by distinct parts of input images, and the parts of the images that produced larger prediction errors were not always consistent with an intuitive interpretation of prediction errors as representing surprising or unexpected features of inputs. 
These observations are consistent with the fact that prediction errors approximate gradients of the loss function in backpropagation~\cite{whittington2017approximation,millidge2020predictive,song2020can,salvatori2021predictive}. Gradients are large for input features that have a larger impact on the loss. While surprising features can have a large impact on the loss, unsurprising features can as well. We only verified this finding empirically on a few examples. The reader can try additional examples by inserting the URL of any image into the file \verb#PredErrsFromURLimage.ipynb# contained in the directories linked in Materials and Methods, and can also be accessed directly at  
%\url{https://colab.research.google.com/drive/1tHnwpihvAO76YitjFJ8T18AW4W-VXKrH}
\url{https://bit.ly/3JwGUM9}. 
Future work should attempt to quantify the relationship between prediction errors and surprising features more systematically across many inputs. In addition, prediction errors could be computed for learning tasks associated with common experimental paradigms so they can be used to make experimentally testable predictions.
%Finally, the network we used to quantify errors was originally pre-trained using backpropagation. Future work should consider the correspondence of prediction errors with surprising stimulus features in networks pre-trained by 

When interpreting artificial deep neural networks as models of biological neuronal networks, it is common to compare activations in the artificial network to biological neurons' firing rates~\cite{SchrimpfKubilius2018BrainScore,Schrimpf2020integrative}. However, under predictive coding models and other models in which errors are propagated upstream by feedback connections, many biological interpretations posit the existence of ``error neurons'' that encode the errors sent upstream. In most such models (including predictive coding), error neurons reflect or approximate the gradient of the loss function with respect to artificial neurons' activations, $\delta_\ell$. Any model that hypothesizes the neural representation of backpropagated errors would predict that some recorded neural activity should reflect these errors. Therefore, if we want to draw analogues between artificial and biological neural networks, the activity of biological neurons should be compared  to both the activations {\it and} the gradients of artificial neurons.

Following previous work~\cite{millidge2020predictive,whittington2017approximation}, we took the covariance matrices underlying the probabilistic model to be identity matrices, $\Sigma_\ell=I$, when deriving the predictive coding model. We also showed that relaxing this assumption by allowing for arbitrary precision matrices in hidden layers  does not affect learning under the fixed prediction assumption. 
Future work should consider the utility of accounting for covariance (or precision) matrices in models without the fixed prediction assumption ({\it i.e.}, under the ``strict'' model) and accounting for precisions or covariances in  the output layer. Moreover, precision matrices could still have benefits in other settings such as recurrent network models, unsupervised learning, or active inference.

Predictive coding and deep neural networks (trained by backpropagation) are often viewed as competing models of brain function. %However, their relationships studied here suggest that they can be viewed as two different ways of looking at the same or similar learning algorithms. 
Better understanding their relationship can help in the interpretation and implementation of each algorithm as well as their mutual relationships to biological neuronal networks. 

\section*{Materials and Methods}

All numerical examples were performed on GPUs using Google Collaboratory with custom-written PyTorch code. 
The networks trained on MNIST used two convolutional and three fully connected layers with rectified linear activation functions using 2 epochs, a learning rate of $0.002$, and a batch size of 300. The networks trained on CIFAR-10 used three convolutional and three fully connected layers with rectified linear activation functions using 5 epochs, a learning rate of $0.01$, and a batch size of 256. All networks were trained using the Adam optimizer with gradients replaced by the output of the respective algorithm. 
A Google Drive folder with Colab notebooks that produce all figures in this text can be found at\\ 
\noindent 
\url{https://drive.google.com/drive/folders/1m_y0G_sTF-pV9pd2_sysWt1nvRvHYzX0}\\ 
\noindent A copy of the same code is also stored at\\ 
\noindent \url{https://github.com/RobertRosenbaum/PredictiveCodingVsBackProp}\\
\noindent Full details of the neural network architectures and metaparameters can be found in this code. 

\subsection*{Torch2PC software for predictive coding with PyTorch models}%\label{S:1}

The figures above were all produced using PyTorch~\cite{paszke2019pytorch} models combined with custom written functions for predictive coding. 
Functions for predictive coding with PyTorch models are collected in the Github Repository Torch2PC.
 Currently, the only available functions are intended for models built using the Sequential class, but more general functions will be added to \texttt{Torch2PC} in the future. The functions can be imported using the following commands
\begin{verbatim}
!git clone https://github.com/RobertRosenbaum/Torch2PC.git
from Torch2PC import TorchSeq2PC as T2PC
\end{verbatim}
The primary function in \texttt{TorchSeq2PC} is \texttt{PCInfer}, 
%\begin{verbatim}
%vhat,Loss,dLdy,v,epsilon=
%  PCInfer(model,LossFun,X,Y,ErrType,eta=.1,n=20,vinit=None)
%\end{verbatim}
which performs one predictive coding step (computes one value of $d\theta$) on a batch of inputs and labels. The function takes an input \texttt{ErrType}, which is a string that determines whether to use a strict interpretation of predictive coding (Algorithm~2; \texttt{ErrType="Strict"}), predictive coding with the fixed prediction assumption (Algorithm~3; \texttt{"FixedPred"}), or to compute the gradients exactly using backpropagation (Algorithm~1; \texttt{"Exact"}). 
Algorithm~2 can be called as follows,
\begin{verbatim}
vhat,Loss,dLdy,v,epsilon=
  T2PC.PCInfer(model,LossFun,X,Y,"Strict",eta,n,vinit)
\end{verbatim}
where \texttt{model} is a Sequential PyTorch model, \texttt{LossFun} is a loss function, \texttt{X} is a mini-batch of inputs, \texttt{Y} is a mini-batch of labels, \texttt{eta} is the step size, \texttt{n} is the number of iterations to use, and \texttt{vinit} is the initial value for the beliefs. If \texttt{vinit} is not passed, it is set to the result from a forward pass, \texttt{vinit=vhat}. The function returns a list of activations from a forward pass at each layer as \texttt{vhat}, the loss as \texttt{Loss}, the gradient of the output with respect to the loss as \texttt{dLdy}, a list of beliefs, $v_\ell$, at each layer as \texttt{v}, and a list of prediction errors, $\epsilon_\ell$, at each layer as \texttt{epsilon}. The values of the parameter updates, $d\theta_\ell$, are stored in the \texttt{grad} attributes of each parameter, \texttt{model.param.grad}. Hence, after a call to \texttt{PCInfer}, gradient descent could be implemented by calling
\begin{verbatim}
with torch.no_grad():
  for p in modelPC.parameters():
    p-=eta*p.grad
\end{verbatim}
Alternatively, an arbitrary optimizer could be used by calling
\begin{verbatim}
optimizer.step() 
\end{verbatim}
where \texttt{optimizer} is an optimizer created using the PyTorch \texttt{optim} class, {\it e.g.}, by calling \\ \texttt{optimizer=optim.Adam(model.parameters())} before the call to \texttt{T2PC.PCInfer}.

The input \texttt{model} should be a PyTorch Sequential model. Each layer is treated as a single predictive coding layer. Multiple functions can be included within the same layer by wrapping them in a separate call to \texttt{Sequential}. For example the following code:
\begin{verbatim}
model=nn.Sequential(    
    nn.Conv2d(1,10,3),
    nn.ReLU(),
    nn.MaxPool2d(2),
    nn.Conv2d(10,10,3),
    nn.ReLU())
\end{verbatim}
will treat each item as its own layer (5 layers in all). To treat each "convolutional block" as a separate layer, instead do
\begin{verbatim}
model=nn.Sequential(
    
    nn.Sequential(
      nn.Conv2d(1,10,3),
      nn.ReLU(),
      nn.MaxPool2d(2)),
    
    nn.Sequential(
      nn.Conv2d(10,10,3),
      nn.ReLU()))
\end{verbatim}
which has just 2 layers.

Algorithm~3 can be called as follows,
\begin{verbatim}
vhat,Loss,dLdy,v,epsilon=
  T2PC.PCInfer(model,LossFun,X,Y,"FixedPred",eta,n)
\end{verbatim}
The input \texttt{vinit} is not used for Algorithm~3, so it does not need to be passed in. 
The exact values computed by backpropagation can be obtained by calling 
\begin{verbatim}
vhat,Loss,dLdy,v,epsilon=
  T2PC.PCInfer(model,LossFun,X,Y,"Exact")
\end{verbatim}
The inputs \texttt{vinit}, \texttt{eta}, and \texttt{n} are not used for computing exact gradients, so they do not need to be passed in. Theorem~1 says that 
\begin{verbatim}
T2PC.PCInfer(model,LossFun,X,Y,"FixedPred",eta=1,n=len(model))
\end{verbatim}
computes the same values as 
\begin{verbatim}
T2PC.PCInfer(model,LossFun,X,Y,"Exact")
\end{verbatim}
up to numerical floating point errors. 
The inputs \texttt{eta}, \texttt{n}, and \texttt{vinit} are optional. If they are omitted by calling
\begin{verbatim}
T2PC.PCInfer(model,LossFun,X,Y,ErrType)
\end{verbatim}
then they default to \texttt{eta=.1,n=20,vinit=None} which produces \texttt{vinit=vhat} when \\
\texttt{ErrType="Strict"}. 
More complete documentation and a complete example is provided as \\
 \verb#SimpleExample.ipynb# in the GitHub repository and in the code accompanying this paper. More examples are provided by the code accompanying each figure above. 
%%%

\clearpage

\section*{Supporting information}

\paragraph*{S1 Figure.}\label{S:1}
~\\~

\includegraphics[width=4.5in]{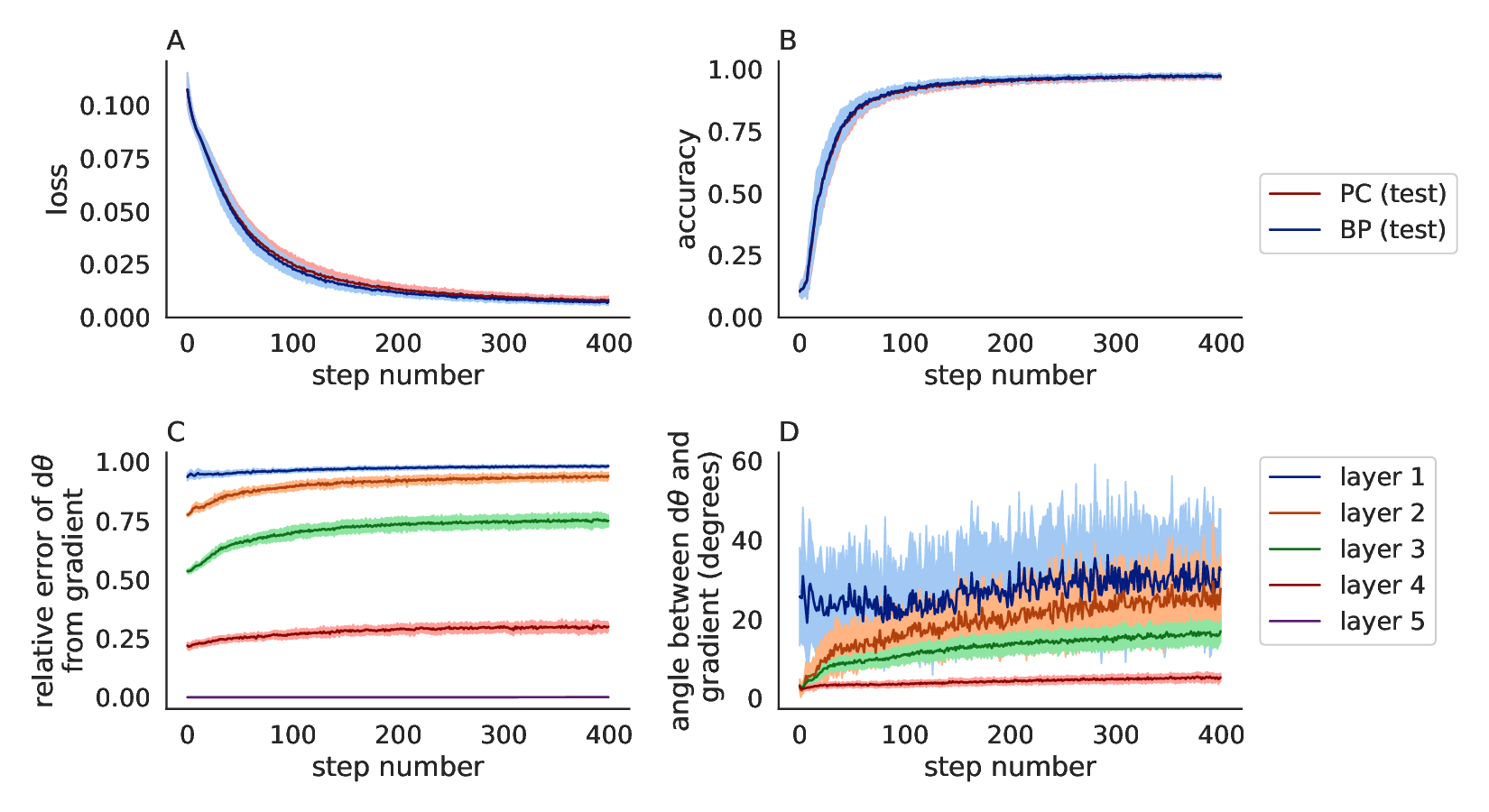}
 \noindent {{\bf Comparing backpropagation and predictive coding in a convolutional neural network trained on MNIST across multiple trials.} 
Same as Figure~\ref{F:PC} except the model was trained 30 times with different random seeds. Dark curves show the mean values and shaded regions show $\pm$ one standard deviation across trials. 
}

\clearpage

\paragraph*{S2 Figure.}\label{S:2}
~\\~

\includegraphics[width=4.5in]{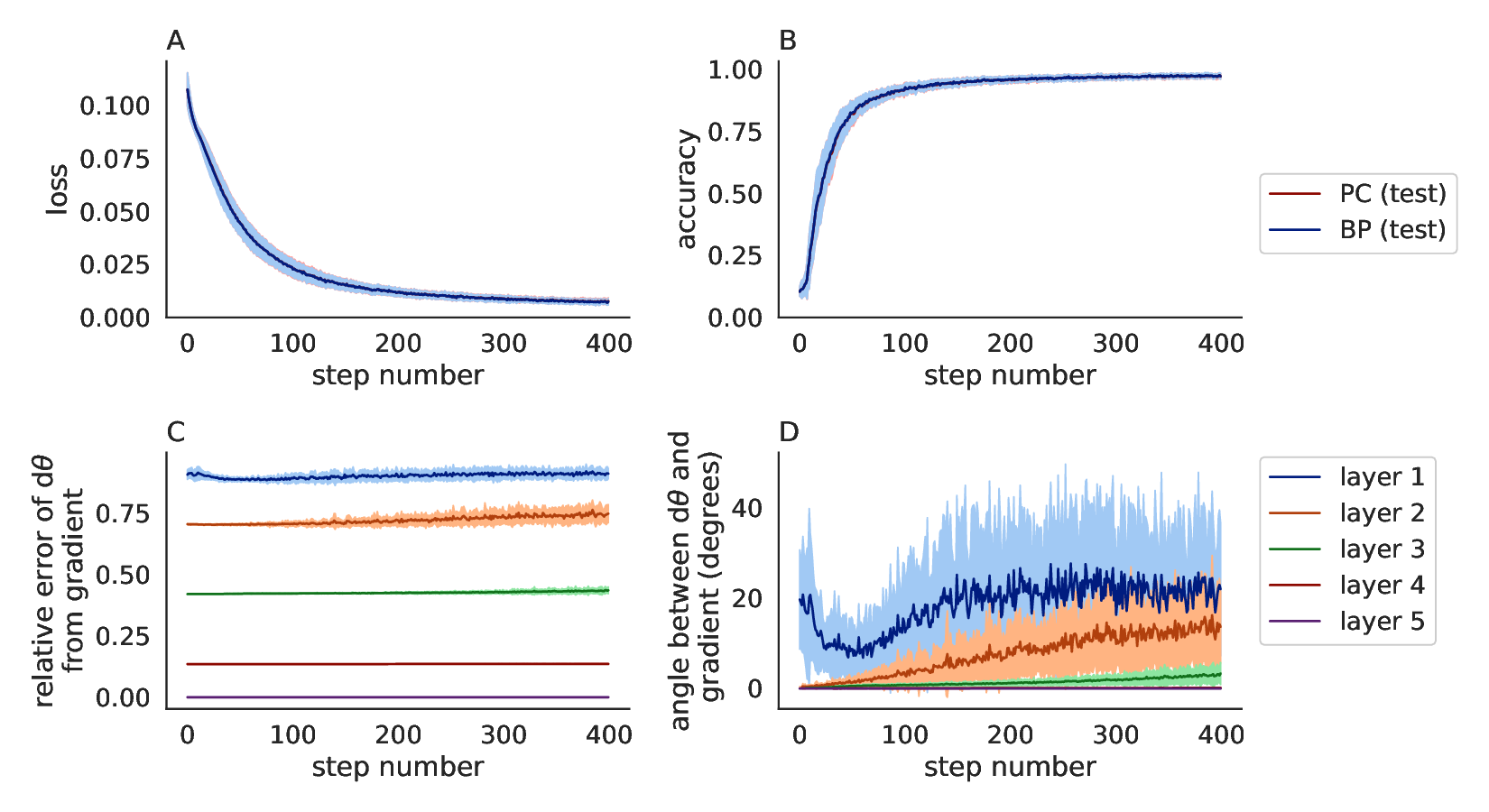}
 \noindent {{\bf Comparing backpropagation and predictive coding modified by the fixed prediction assumption in a convolutional neural network trained on MNIST across multiple trials.} 
Same as Figure~\ref{F:ModPC} except the model was trained 30 times with different random seeds. Dark curves show the mean values and shaded regions show $\pm$ one standard deviation across trials. 
}

\clearpage

%\bibliography{PCodingBP}

\end{document}